\newcommand{\startsubtheorem}{
    \def\subtheoremcounter{theorem}%
  \refstepcounter{theorem}%
  \protected@edef\theparentnumber{\csname thetheorem\endcsname}%
  \setcounter{parentnumber}{\value{theorem}}%
  \setcounter{theorem}{0}%
  \expandafter\def\csname thetheorem\endcsname{\theparentnumber.\Alph{theorem}}
}
\newcommand{\stopsubtheorem}{
    \setcounter{\subtheoremcounter}{\value{parentnumber}}%
    \expandafter\def\csname thetheorem\endcsname{\arabic{theorem}}
}
\newcounter{parentnumber}
\newtheorem{theorem}{Theorem}
\newtheorem{corollary}{Corollary}
\newtheorem{claim}{Claim}
\theoremstyle{remark}
\newtheorem{remark}{Remark}
\begin{document}

\title{Gabidulin Codes with Support Constrained Generator Matrices}

\author{Hikmet~Yildiz~and~Babak~Hassibi\\
Department of Electrical Engineering\\
California Institute of Technology\\
Email: \mbox{\{hyildiz, hassibi\}@caltech.edu}
}
\date{}

\maketitle

\begin{abstract}
Gabidulin codes are the first general construction of linear codes that are maximum rank distant (MRD). 
They have found applications in linear network coding, for example, when the transmitter and receiver are oblivious to the inner workings and topology of the network (the so-called incoherent regime).
The reason is that Gabidulin codes can be used to map information to linear subspaces, which in the absence of errors cannot be altered by linear operations, and in the presence of errors can be corrected if the subspace is perturbed by a small rank.
Furthermore, in distributed coding and distributed systems, one is led to the design of error correcting codes whose generator matrix must satisfy a given support constraint. 
In this paper, we give necessary and sufficient conditions on the support of the generator matrix that guarantees the existence of Gabidulin codes and general MRD codes. When the rate of the code is not very high, this is achieved with the same field size necessary for Gabidulin codes with no support constraint.
When these conditions are not satisfied, we characterize the largest possible rank distance under the support constraints and show that they can be achieved by subcodes of Gabidulin codes.
The necessary and sufficient conditions are identical to those that appear for MDS codes which were recently proven by Yildiz et al. and Lovett in the context of settling the GM-MDS conjecture.
\end{abstract}
\nocite{yildiz2018optimum,lovett2018mds} 

\section{Introduction}
Linear codes are desired to have the maximum minimum distance, for some distance measure, in order to be more resistant to errors in the channel.
If the objective is to detect and correct as many error symbols as possible, the distance measure to be used is the Hamming distance.
The Singleton bound $(n-k+1)$ is an upper bound on the largest value for the minimum Hamming distance $d_H$ a code can have, where $n$ is the length and $k$ is the dimension of the code. 
Codes achieving it are called Maximum Distance Separable (MDS) codes and a well known example for an MDS code is the Reed--Solomon code. The necessary and sufficient conditions for the existence of Reed--Solomon codes in terms of the zero structure of the generator matrix were conjectured by Dau \textit{et al.} \cite{dau2014existence}, and referred to as the GM-MDS conjecture, which was worked on by many researchers in \cite{halbawi2014distributedrs,yan2014weakly,dau2015simple,halbawi2016balanced,halbawi2016balanced2,heidarzadeh2017algebraic,song2018generalized,yildiz2018further,greaves2019reed} and finally
proved in our previous work \cite{yildiz2018optimum} and in the independent work of Lovett \cite{lovett2018mds}.

In some other scenarios, different distance metrics can be more desirable. For instance, the rank distance, $d_R$, is another metric, which can be used to design linear codes in random linear network coding or in scenarios where the transmitter and receiver are oblivious to the topology and inner workings of the network (this is often called the incoherent regime).
To see why, suppose the code is defined over an extension field $\mathbb F_{q^s}$, which can be thought of as a vector space over a base field $\mathbb F_q$, then the rank of a codeword in $\mathbb F_{q^s}^n$ is defined as the dimension of the span of the entries of the codeword over $\mathbb F_q$. Since the dimension of the span is at most the number of nonzero elements, we have $d_R\leq d_H$. Hence, a similar Singleton bound $(n-k+1)$ can be derived for the largest rank distance for a fixed code length $n$ and dimension $k$. A code achieving this is called a Maximum Rank Distance (MRD) code and Gabidulin codes due to Delsarte \cite{delsarte1978bilinear} and Gabidulin \cite{gabidulin1985theory} are the first general constructions of it. These codes require a field size of $q^s$, with $s\geq n$. Very recently, a new class of MRD codes, called twisted Gabidulin codes, have been constructed by Sheekey \cite{sheekey2015new}, which have been further generalized in \cite{lunardon2018generalized,puchinger2017further,sheekey2018new}.

In a random linear network, every node passes a random linear combination of the messages it has received to the nodes to which it is connected. In this model, the destination node will get a number of random linear combinations of the messages sent from different sources. Silva \textit{et al.} \cite{silva2008rank} showed that subspace codes or Gabidulin codes can be used to transfer messages through this network model.
In the absence of errors, the random linear combinations in the network cannot alter the transmitted subspace. In the presence of errors, or adversaries, a few nodes may transmit codewords that are not linear combinations of what they receive. This will alter the subspace by a small rank (given by the number of erroneous nodes or adversaries) and can be corrected by an MRD code.
Halbawi \textit{et al.} \cite{halbawi2014distributed} studied a scenario, where each of the source nodes has access to only a subset of all messages. They showed that subcodes of Gabidulin codes with generator matrices that have particular zero pattern (depending on what subset each source has access to) can be used under this scenario. However, they showed the existence and the code design only for networks that have up to $3$ source nodes. More specifically, they designed subcodes of Gabidulin codes with the largest rank distance under a support constraint on the generator matrix such that the rows can be divided into $3$ groups, where the rows in each group have the same zero pattern.

In this paper, we will give necessary and sufficient conditions for the existence of Gabidulin codes with support constrained generator matrices. Furthermore, if these constraints are not satisfied, we show that the largest possible rank distance can be achieved by subcodes of Gabidulin codes. Our result generalizes the result in \cite{halbawi2014distributed} to any number of source nodes in the network.
The necessary and sufficient conditions on the support constraints to guarantee the existence of Gabidulin codes and general MRD codes is identical to the conditions for MDS codes (that was recently established in \cite{yildiz2018optimum,lovett2018mds} in the context of the GM-MDS conjecture).
Furthermore, the field size is now $q^s$, with $s\geq \max \{n,k-1+\log_q k\}$. When the rate of the code is not too large ($r=\frac kn\leq 1-\frac{\log_q k-1}n$) there is no penalty in field size compared to a Gabidulin code with no support constraints.

The rest of the paper is organized as follows: 
In Section \ref{sec:coding}, after defining the rank metric and characterizing the generator matrices of Gabidulin codes, we define our problem, namely finding necessary and sufficient conditions for the existence of the Gabidulin codes with support constrained generator matrices.
Then, we solve this problem by relying on a claim (Claim \ref{claim}).
Section \ref{sec:algebraic} then proposes a purely algebraic problem on linearized polynomials that contains a more general theorem than Claim \ref{claim} and provides a detailed proof. The advantage of the generalization is that it lends itself to proof by induction.
Finally, we conclude in Section \ref{sec:conclusion}.

\section{Gabidulin Codes with Support Constraints}
\label{sec:coding}
In this section, first we will define the rank distance of a linear code, show its relation with the Hamming distance, and give its largest possible value in terms of the length $n$ and dimension $k$ of the code.
Secondly, we will write some necessary conditions on the support of the generator matrix of a code for the rank distance to achieve this largest possible value.
Thirdly, we will characterize the generator matrices of Gabidulin codes, which achieve the largest possible rank distance.
Then, we will prove that those necessary conditions are also sufficient for the existence of Gabidulin codes, which is the main result of this paper. Our proof relies on a claim (Claim \ref{claim}), which will be proven in Section \ref{sec:algebraic}, and constitutes the major technical contribution of our work.

\subsection{Rank Distance}
Let $\mathbb F_q$ be a finite field and $\mathbb F_{q^s}$ be an extension field of $\mathbb F_q$. Then, $\mathbb F_{q^s}$ forms a linear space over $\mathbb F_q$. Hence, for any $c=(c_1,\dots,c_n)\in\mathbb F_{q^s}^n$, we can define the rank of $c$ as
\begin{equation}
    \operatorname{rank}(c) = \dim(\operatorname{span}\{c_1,\dots,c_n\})
\end{equation}
Note that $\operatorname{rank}(c)$ is at most the Hamming weight of $c$, i.e. the number of nonzero entries of $c$:
\begin{equation}
\label{eq:hamming}
    \operatorname{rank}(c)\leq \|c\|_H
\end{equation}

Let $\mathcal C\subset\mathbb F_{q^s}^n$ be a linear code with $\dim\mathcal C=k$. The rank distance of $\mathcal C$ is defined as
\begin{equation}
    d_R = \min_{0\neq c\in\mathcal C} \operatorname{rank}(c)
\end{equation}
Then, by (\ref{eq:hamming}), the rank distance is less than or equal to the Hamming distance:
\begin{equation}
    d_R\leq d_H
\end{equation}
Hence, the Singleton bound on $d_H$ also holds for the rank distance: $d_R\leq n-k+1$. The codes achieving this bound are called Maximum Rank Distance (MRD) codes.
\begin{remark}
\label{remark}
    An MRD--code is also an MDS--code but the opposite is not true in general.
\end{remark}

\subsection{Support constraints (zero constraints)}
Suppose that we want to design an MRD--code under a support constraint on the generator matrix $\mathbf G\in\mathbb F_{q^s}^{k\times n}$.
We describe these support constraints through the subsets $\mathcal Z_1,\mathcal Z_2,\dots, \mathcal Z_k\subset[n]$ as follows:
\begin{equation}
\label{eq:zeroconstraints}
\forall i\in[k],\; \forall j\in \mathcal Z_i,\quad \mathbf G_{ij} = 0
\end{equation}

It is well known that \cite{dau2014existence,yildiz2018optimum,lovett2018mds} a necessary condition for a code to be MDS is
\begin{equation}
\label{eq:subsetineq}
\left|\bigcap_{i\in\Omega}\mathcal Z_i\right| + |\Omega| \leq k
\end{equation}
for all nonempty $\Omega\subset[k]$. Hence, it is also \emph{necessary} for the existence of MRD--codes by Remark \ref{remark}. Later, we will show that it is actually a \emph{sufficient} condition to design MRD--codes for fields of size $q^s$, with $s\geq \max\{n,k-1+\log_q k\}$.

Note that for $\Omega=\{i\}$, we have $|\mathcal Z_i|\leq k-1$.
In \cite[Theorem 2]{dau2014existence}, Dau \textit{et al.} showed that one can add elements from $[n]$ to each of these subsets until each has exactly $k-1$ elements by preserving (\ref{eq:subsetineq}) (We also provide a different proof in Appendix \ref{appx:hallsthm}). Note that this operation will only put more zero constraints on $\mathbf G$ but not remove any. This means that the code we design under the new constraints will also satisfy the original constraints. Therefore, without loss of generality, along with (\ref{eq:subsetineq}), we will further assume that
\begin{equation}
\label{eq:assumption}
    |\mathcal Z_i| = k-1,\quad \forall i\in[k]
\end{equation}

\subsection{Gabidulin Codes}
Gabidulin codes were introduced in \cite{delsarte1978bilinear} and \cite{gabidulin1985theory} and are the first general constructions (meaning for any $n$ and $k$) of an MRD code.
Their generator matrices are of the following form:
\begin{equation}
\mathbf G_{\text{GC}} = \begin{pmatrix}
\alpha_1^{q^0} & \alpha_2^{q^0} & \cdots & \alpha_n^{q^0}\\
\alpha_1^{q^1} & \alpha_2^{q^1} & \cdots & \alpha_n^{q^1}\\
\vdots & \vdots & & \vdots\\
\alpha_1^{q^{k-1}} & \alpha_2^{q^{k-1}} & \cdots & \alpha_n^{q^{k-1}}
\end{pmatrix}\in\mathbb F_{q^s}^{k\times n}
\end{equation}
where $\alpha_1,\alpha_2,\dots,\alpha_n\in\mathbb F_{q^s}$ are linearly independent over $\mathbb F_q$ and hence, $s\geq n$.
We remark that the linear independence of the $\alpha_i$'s over $\mathbb F_q$ is equivalent to the linear independence of any $k$ columns of $\mathbf G_{\text{GC}}$ over $\mathbb F_{q^s}$ \cite[Lemma~3.51]{lidl1997finite}.
This matrix is also known as the Moore matrix.

Furthermore, multiplying $\mathbf G_{\text{GC}}$ by an invertible matrix from the left will not change the code (i.e. the row span) but only changes the basis:
\begin{equation}
\mathbf G = \mathbf T\cdot \mathbf G_{\text{GC}}
\end{equation}
where $\mathbf T\in\mathbb F_{q^s}^{k\times k}$ is full rank.
Hence, $\mathbf G$ can be also used as a generator matrix of the same Gabidulin code.
This will allow us to introduce zeros at the desired positions on the generator matrix.

Notice that if we define the polynomials
\begin{equation}
\label{eq:polynomials}
    f_i(x)=\sum_{j=1}^k\mathbf T_{ij}x^{q^{j-1}}
\end{equation}
for $i\in[k]$, then the entries of $\mathbf G$ will be the values of the $f_i$'s evaluated at the $\alpha_j$'s i.e. $\mathbf G_{ij}=f_i(\alpha_j)$. Then, the support constraints in (\ref{eq:zeroconstraints}) on $\mathbf G$ will become root constraints on the $f_i$'s:
\begin{equation}
\label{eq:rootconstraints}
\forall i\in[k],\; \forall j\in \mathcal Z_i,\quad f_i(\alpha_j)=0
\end{equation}

In view of the above, the question we would like to ask is whether under condition (\ref{eq:subsetineq}), there exist an invertible matrix $\mathbf T$ and linearly independent $\alpha_1,\alpha_2,\dots,\alpha_n\in\mathbb F_{q^s}$ such that (\ref{eq:rootconstraints}) holds.
In other words, since $\mathbf T$ is invertible, $\mathbf G$ has the same MRD property of $\mathbf G_{\text{GC}}$, and also satisfies the support constraints in (\ref{eq:zeroconstraints}).

We should mention that a similar question for the existence of MDS codes with support constraints on the generator matrix was asked by \cite{dau2014existence} and was referred to as the GM--MDS conjecture. This was recently resolved in \cite{yildiz2018optimum,lovett2018mds}, where it was shown that under (\ref{eq:subsetineq}) MDS codes with small fields size could be constructed using Reed--Solomon codes. The current paper can be viewed as an extension of that result to rank-metric codes and Gabidulin codes.

\subsection{Example}
Let $q=2,s=4,k=3,n=4$. Suppose we have the following support constraints: $\mathcal Z_1=\{1,2\}, \mathcal Z_2=\{2,3\}, \mathcal Z_3=\{3,4\}$, i.e.,
\begin{equation}
    \mathbf G = 
    \begin{pmatrix}
    0 & 0 & \times & \times\\
    \times & 0 & 0 & \times\\
    \times & \times & 0 & 0
    \end{pmatrix}
\end{equation}
Note that these constraints satisfy (\ref{eq:subsetineq}).
We need to find $\alpha_1,\alpha_2,\alpha_3,\alpha_4\in\mathbb F_{16}$ that are linearly independent over $\mathbb F_2$ and an invertible matrix $\mathbf T\in\mathbb F_{16}^{3\times 3}$ such that

\begin{equation}
    \mathbf T\cdot \begin{pmatrix}
    \alpha_1 & \alpha_2 & \alpha_3 & \alpha_4\\
    \alpha_1^2 & \alpha_2^2 & \alpha_3^2 & \alpha_4^2\\
    \alpha_1^4 & \alpha_2^4 & \alpha_3^4 & \alpha_4^4
    \end{pmatrix} = 
    \begin{pmatrix}
    0 & 0 & \times & \times\\
    \times & 0 & 0 & \times\\
    \times & \times & 0 & 0
    \end{pmatrix}
\end{equation}
The following matrix satisfies these zero constraints (Later, we will show that this matrix is actually unique up to a scaling):
\begin{equation}
    \mathbf T = \begin{pmatrix}
    \alpha_1\alpha_2(\alpha_1+\alpha_2) & \alpha_1^2+\alpha_2^2+\alpha_1\alpha_2 & 1\\
    \alpha_2\alpha_3(\alpha_2+\alpha_3) & \alpha_2^2+\alpha_3^2+\alpha_2\alpha_3 & 1\\
    \alpha_3\alpha_4(\alpha_3+\alpha_4) & \alpha_3^2+\alpha_4^2+\alpha_3\alpha_4 & 1
    \end{pmatrix}
\end{equation}
Let's choose $\alpha_1=1,\alpha_2=a,\alpha_3=a^2,\alpha_4=a^3$ in $\mathbb F_{16}$ with the primitive polynomial $a^4+a+1$. Then, they are linearly independent over $\mathbb F_2$ and $\det\mathbf T = a^{13}\neq 0$; so, $\mathbf T$ is invertible.
Therefore,
\begin{equation}
    \mathbf G = 
    \begin{pmatrix}
    0 & 0 & a^{10} & a^3\\
    a^7 & 0 & 0 & a^{14}\\
    a^5 & a^{11} & 0 & 0
    \end{pmatrix}
\end{equation}
is the generator matrix for a Gabidulin code, which satisfies the support constraints.

Note that there are other choices of the $\alpha_i$ that can solve our problem too. However, the primary focus of this paper will be to show the existence of such a choice in general.

\subsection{Linearized Polynomials}
Polynomials in the form of (\ref{eq:polynomials}) are called \emph{linearized polynomials} ($q$-polynomials) and it is beneficial to give some of their properties before moving forward.
First, we should note that for any $a,b\in\mathbb F_{q^s}$ and $i\geq 0$, we have that $(a+b)^{q^i}=a^{q^i}+b^{q^i}$, which is commonly referred to as the \emph{Freshman's Dream} \cite{hungerford1974algebra}. Furthermore, for any $\gamma\in\mathbb F_q$, we have that $\gamma^{q^i}=\gamma$. Therefore, any linearized polynomial in the form of
\begin{equation}
\label{eq:linearizedpol}
    f(x)=\sum_{i=0}^dc_ix^{q^i},\quad c_i\in\mathbb F_{q^s}
\end{equation}
is actually a linear map $f:\mathbb F_{q^s}\to\mathbb F_{q^s}$ when $\mathbb F_{q^s}$ is considered as a linear space over $\mathbb F_q$.
Hence, the roots of $f$ form a subspace over $\mathbb F_q$.

Conversely, it can be shown that for any subspace $V\subset\mathbb F_{q^s}$, the polynomial
\begin{equation}
\label{eq:subspacedef}
    f(x)=\prod_{\beta\in V}(x-\beta)
\end{equation}
is a linearized polynomial, i.e. after expanding the product, the monomials whose exponent is not a power of $q$ will vanish \cite[Theorem~3.52]{lidl1997finite}.

The $q$-degree of the linearized polynomial $f$ in (\ref{eq:linearizedpol}) is defined as $\deg_q f=d$ if $c_d\neq 0$. Then, the $q$-degree of $f$ in (\ref{eq:subspacedef}) can be expressed as $\deg_q f=\dim V$.

We will now move on to our main problem and later revisit linearized polynomials in Section \ref{sec:algebraic}, where more properties of them will be given.

\subsection{Existence of Gabidulin Codes}
Note that by the definition in (\ref{eq:polynomials}), we have $\deg_q f_i\leq k-1$. Furthermore, since the $\alpha_j$'s are assumed to be linearly independent, by (\ref{eq:assumption}) and (\ref{eq:rootconstraints}), each $f_i$ is enforced to have $|\mathcal Z_i|=k-1$ linearly independent roots.
Therefore, the $f_1,\dots, f_k$ are uniquely defined up to a scaling, and so in monic form
\begin{equation}
\label{eq:fi_uniquely}
	f_i(x) = \prod_{\beta\in\text{span}\{\alpha_j:j\in \mathcal Z_i\}}(x-\beta),
\end{equation}
which, in turn, uniquely determines all the entries of $\mathbf T$ in terms of $\alpha_1,\dots, \alpha_n$ due to (\ref{eq:polynomials}).

Then, the problem becomes finding linearly independent $\alpha_1,\dots,\alpha_n\in\mathbb F_{q^s}$ over $\mathbb F_q$ such that $\det\mathbf T\neq 0$. In other words, we need to find $\alpha_1,\dots,\alpha_n\in\mathbb F_{q^s}$ such that
\begin{equation}
	F(\alpha_1,\dots,\alpha_n)\triangleq F_1(\alpha_1,\dots,\alpha_n)F_2(\alpha_1,\dots,\alpha_n)\neq 0
\end{equation}
where 
\begin{align}
	F_1(\alpha_1,\dots,\alpha_n) &= \det \mathbf T\\
	F_2(\alpha_1,\dots,\alpha_n) &=\begin{vmatrix}
	\alpha_1^{q^0} & \alpha_2^{q^0} & \cdots & \alpha_n^{q^0}\\
	\alpha_1^{q^1} & \alpha_2^{q^1} & \cdots & \alpha_n^{q^1}\\
	\vdots & \vdots & & \vdots\\
	\alpha_1^{q^{n-1}} & \alpha_2^{q^{n-1}} & \cdots & \alpha_n^{q^{n-1}}
	\end{vmatrix}\label{eq:f2}
\end{align}
because $\alpha_i$'s are linearly independent if and only if $F_2(\alpha_1,\dots,\alpha_n)\neq0$ \cite[Lemma~3.51]{lidl1997finite}.

It is known, by the Schwartz-Zippel Lemma, that there exist such $\alpha_j$'s in $\mathbb F_{q^s}$ if $F$ is not the zero polynomial and for all $j\in[n]$, $\deg_{\alpha_j}F < q^s$.
Note that $F_2$ is not the zero polynomial since the coefficient of the monomial $\prod_{i=1}^n\alpha_i^{q^{i-1}}$ in $F_2$ is $1$ because it can only be obtained through multiplication of the diagonals. Furthermore, if Claim \ref{claim} below is true, we can conclude that $F$ is not the zero polynomial.
\begin{claim}
\label{claim}
$\det\mathbf T$ is not the zero polynomial if (\ref{eq:subsetineq}) is satisfied.\hfill$\diamond$
\end{claim}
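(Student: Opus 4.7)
The plan is to recast Claim \ref{claim} as an algebraic statement about the linear independence of linearized polynomials and then prove it by induction, after strengthening the hypothesis to a form that is stable under the reduction step.

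To set up the reformulation: by (\ref{eq:fi_uniquely}), the $i$-th row of $\mathbf{T}$ is the coefficient vector of the monic linearized polynomial $f_i(x) = \prod_{\beta\in V_i}(x-\beta)$, where $V_i = \operatorname{span}_{\mathbb{F}_q}\{\alpha_j : j\in \mathcal{Z}_i\}$. Since $\deg_q f_i = |\mathcal{Z}_i| = k-1$, the matrix $\mathbf{T}$ is $k\times k$, and the statement that $\det\mathbf{T}\neq 0$ as a polynomial in $\alpha_1,\dots,\alpha_n$ is equivalent to $f_1,\dots,f_k$ being linearly independent over $\mathbb{F}_{q^s}(\alpha_1,\dots,\alpha_n)$ inside the $k$-dimensional space of linearized polynomials of $q$-degree at most $k-1$.

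The main attack is induction on $k$. The base case $k=1$ is immediate: $\mathcal{Z}_1=\emptyset$, so $f_1(x)=x$ and $\mathbf{T}=(1)$. For the inductive step I would specialize one of the $\alpha_{j^*}$ to a fixed $\mathbb{F}_q$-linear combination of carefully chosen other $\alpha_l$'s; if the substitution is arranged so that $\alpha_{j^*}$ already lies in $V_i$ for certain rows $i$, then those rows' constraints become redundant and the rest should collapse to a valid instance with smaller $k$. The combinatorial content of the step lies in choosing $j^*$ and the coefficients so that the Hall-type condition (\ref{eq:subsetineq}) continues to hold for the smaller instance, mirroring the evaluation-point-merging trick that powers the MDS analog of \cite{yildiz2018optimum}.

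The hard part, and the reason the paper devotes Section \ref{sec:algebraic} to a strengthening of Claim \ref{claim}, is that the equality $|\mathcal{Z}_i|=k-1$ together with the strict inequality (\ref{eq:subsetineq}) is too rigid to survive such a specialization: the effective subsets tend to shrink unevenly and distinct subspaces $V_i$ can begin to coincide. My proposal is therefore to generalize the claim to allow linearized polynomials $f_i$ of arbitrary $q$-degrees $d_i\leq k-1$ with a correspondingly relaxed combinatorial condition on the $\mathcal{Z}_i$, and to induct on a suitable potential such as $\sum_i d_i$ or $n+k$; the heart of the argument will be checking that the relaxed condition is preserved under the specialization and that degenerate cases (where some $d_i$ drops) can still be closed. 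Once Claim \ref{claim} is in hand, the field-size bound $s\geq \max\{n,k-1+\log_q k\}$ follows from the Schwartz--Zippel lemma, since each coefficient of $f_i$ has total degree at most $q^{k-1}$ in the $\alpha_j$'s, giving $\deg_{\alpha_j}F_1\leq k\cdot q^{k-1}$ and $\deg_{\alpha_j}F_2\leq q^{n-1}$.
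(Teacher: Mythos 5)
Your reformulation (rows of $\mathbf T$ as coefficient vectors of the monic linearized polynomials $f_i$, nonvanishing of $\det\mathbf T$ as linear independence over the rational function field) matches the paper, and the overall instinct --- strengthen to a statement about lower-degree linearized polynomials, then induct via a specialization of the $\alpha_j$'s --- is the same instinct that leads to Section~\ref{sec:algebraic}. The paper's Case~2c is in fact the simplest instance of your specialization (it sets $x_n=0$, i.e.\ $\alpha_n\mapsto 0$, the trivial $\mathbb F_q$-combination), and the observation that the $\mathbb F_q$-span collapsing by one dimension turns $\prod_{\beta\in V}(x-\beta)$ into a $q$-th power is exactly the mechanism driving the reduction.

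Where the proposal stops short is at the point you yourself flag as "the hard part": you acknowledge the statement must be relaxed, propose "arbitrary $q$-degrees $d_i\leq k-1$" and an ad hoc potential ($\sum_i d_i$ or $n+k$), but do not identify the \emph{form} of the relaxation, and that is where the content actually lies. Two structural ingredients are missing. First, recording the $q$-th power picked up by specialization requires a multiplicity parameter: the right objects are $\mathsf f(\mathcal Z,t)=\prod_{\beta\in\operatorname{span}\{x_i:i\in\mathcal Z\}}(x-\beta)^{q^t}$, and the combinatorial condition must be stated in terms of $\gcd$s, not just $|\bigcap\mathcal Z_i|$, because the $t_i$'s interact via a $\min$ (\ref{p:gcd}). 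Merely "allowing smaller $q$-degree" does not capture this; the degree can be the same while the root structure differs. Second, because composition of linearized polynomials is noncommutative, a shift parameter $r$ is needed in the form $g_i\circ x^{q^r}\circ f_i$ to reduce $k$ when all the $t_i$ are at least $1$ (the paper's Case~2a), and to handle the boundary case with a unique $t_i=0$ (Case~2b) you must factor an $x^q$ out of $g_m$ using \ref{p:xdividesgofnotf} rather than out of $f_m$. Without these parameters the lexicographic induction on $(k,m,n)$ does not close, and there is also a separate branch (Case~1) that reduces $m$ when some subinequality is tight --- your sketch only touches the $n$-reduction. So the proposal identifies the right shape of argument but leaves the decisive generalization unspecified.

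A minor point on the Schwartz--Zippel step: your per-entry degree estimate $q^{k-1}$ and the resulting $\deg_{\alpha_j}F_1\leq k\,q^{k-1}$ are coarser than the paper's $\deg_{\alpha_j}\mathbf T_{i\ell}\leq(q-1)q^{k-2}$ and $\deg_{\alpha_j}F_1\leq(k-1)(q-1)q^{k-2}$ (the last column of $\mathbf T$ is all ones, so only $k-1$ entries contribute). With your bound one can only conclude roughly $s\geq\max\{n,k+\log_q k\}$, one short of the stated $s\geq\max\{n,k-1+\log_q k\}$; the tighter constant comes from the identity $f_i=(f_i')^q-(f_i'(\alpha_j))^{q-1}f_i'$.
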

We will give the proof of Claim \ref{claim} later in Section \ref{sec:algebraic} by proving a slightly more general statement. Therefore, in this section, we will proceed by assuming that it is true.
Then, $F$ is not the zero polynomial and the only question that remains is ``what is the largest value of $\deg_{\alpha_j} F$ over all $j\in [n]$?'', whose answer, in turn, can be used as a sufficient lower bound on the size of the extension field where such $\alpha_j$'s exist.

Notice from (\ref{eq:f2}) that for a fixed $\alpha_j$, the degree of $F_2$ as a polynomial in $\alpha_j$ is
$$\deg_{\alpha_j} F_2=q^{n-1}$$
Now, we will compute $\deg_{\alpha_j} F_1$.
From (\ref{eq:polynomials}), recall that for any $i,\ell\in[k]$, $\mathbf T_{i\ell}$ is the coefficient of $x^{q^{\ell-1}}$ in $f_i(x)$. Since $f_i(x)$ is monic, $\mathbf T_{ik}=1$. For $\ell < k$, $\mathbf T_{i\ell}$ is a polynomial in $\alpha_j$ and $\deg_{\alpha_j}\mathbf T_{i\ell}\leq \deg_{\alpha_j} f_i(x)$ (When writing $\deg_{\alpha_j} f_i(x)$, we consider $f_i(x)$ as a polynomial in $\alpha_j$).

To find $\deg_{\alpha_j} f_i$, consider the definition of $f_i$ in (\ref{eq:fi_uniquely}). Suppose that $j\in \mathcal Z_i$ (Otherwise, $\deg_{\alpha_j}f_i=0$). Let $\mathcal Z'_i=\mathcal Z_i-\{j\}$ and define $f'_i$ as
\begin{equation}
	f'_i(x) = \prod_{\beta\in\text{span}\{\alpha_{j'}:j'\in \mathcal Z'_i\}}(x-\beta)
\end{equation}
which is a linearized polynomial with $\deg_q f'_i=|\mathcal Z'_i|=k-2$ and hence as a usual polynomial $\deg_x f'_i(x) = q^{k-2}$. Since $j\notin\mathcal Z'_i$, $f'_i(x)$ is independent of $\alpha_j$; therefore, we can also write $\deg_{\alpha_j}f'_i(\alpha_j)=q^{k-2}$. Furthermore, we can write that
\begin{align}
	f_i(x)
	&= \prod_{\beta\in\text{span}\{\alpha_{j'}:j'\in \mathcal Z_i\}}(x-\beta)\\
	&= \prod_{\gamma\in\mathbb F_q}\prod_{\beta\in\text{span}\{\alpha_{j'}:j'\in \mathcal Z'_i\}}(x-\gamma\alpha_j-\beta)\\
	&= \prod_{\gamma\in\mathbb F_q} f'_i(x-\gamma\alpha_j)\\
	&= \prod_{\gamma\in\mathbb F_q} (f'_i(x)-\gamma f'_i(\alpha_j))\\
	&= (f'_i(x))^q-(f'_i(\alpha_j))^{q-1}f'_i(x)
\end{align}
where the last step is because of the identity $\prod_{\gamma\in\mathbb F_q}(x-a\gamma)=x^q-a^{q-1}x$.

Hence, $\deg_{\alpha_j}\mathbf T_{i\ell}\leq\deg_{\alpha_j}f_i(x) \leq (q-1)\deg_{\alpha_j} f'_i(\alpha_j)=(q-1)q^{k-2}$. Then,
\begin{align}
    \deg_{\alpha_j} F_1
    &= \deg_{\alpha_j}\det\mathbf T\\
    &\leq \max_{\sigma\in S_k}\sum_{\ell=1}^k\mathbf \deg_{\alpha_j}\mathbf T_{\sigma(\ell),\ell}\\
    &\leq (k-1)(q-1)q^{k-2}
\end{align}
where $S_k$ denotes the set of permutations of $[k]$ and in the last inequality, recall that $\mathbf T_{ik}=1$, whose degree is $0$.
As a result,
\begin{equation}
    \deg_{\alpha_j} F
    \leq q^{n-1}+(k-1)(q-1)q^{k-2}
\end{equation}

So, if the field size is larger than this bound, i.e. $q^s>q^{n-1}+(k-1)(q-1)q^{k-2}$, then there exist $\alpha_1,\dots,\alpha_n\in\mathbb F_{q^s}$ such that $F(\alpha_1,\dots,\alpha_n)\neq 0$.
As a result, we have the following theorem. Note that if $s\geq n$ and $s\geq k-1+\log_qk$, then
\begin{equation}
    q^s=q^{s-1}+(q-1)q^{s-1}\geq q^{n-1}+(q-1)kq^{k-2}>q^{n-1}+(k-1)(q-1)q^{k-2}
\end{equation}
\begin{theorem}
    \label{thm:gabidulin}
	For any $s\geq\max\{n, k-1+\log_qk\}$, if (\ref{eq:subsetineq}) is satisfied, then there exists a Gabidulin code in $\mathbb F_{q^s}$ of length $n$ and dimension $k$ such that its generator matrix satisfies the support constraints in (\ref{eq:zeroconstraints}).\hfill$\diamond$
\end{theorem}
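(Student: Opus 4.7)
The plan is to apply the Schwartz--Zippel lemma to the polynomial $F(\alpha_1,\dots,\alpha_n)=F_1\cdot F_2$ already identified in the excerpt, where $F_1=\det\mathbf T$ (with the entries of $\mathbf T$ determined by (\ref{eq:fi_uniquely}) via (\ref{eq:polynomials})) and $F_2$ is the Moore determinant from (\ref{eq:f2}). Finding suitable $\alpha_1,\dots,\alpha_n\in\mathbb F_{q^s}$---linearly independent over $\mathbb F_q$ (encoded by $F_2\ne 0$) and producing an invertible $\mathbf T$ (encoded by $F_1\ne 0$)---is then guaranteed provided $F$ is not identically zero and each partial degree $\deg_{\alpha_j}F$ is strictly less than $q^s$.

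Non-vanishing of $F$ is handled factor by factor. $F_2$ contains the monomial $\prod_i\alpha_i^{q^{i-1}}$ with coefficient $1$ (the unique diagonal contribution in the Leibniz expansion), and $F_1$ is not identically zero by Claim \ref{claim}. For the degree bound I would treat $F_1$ and $F_2$ separately. Reading off (\ref{eq:f2}) directly gives $\deg_{\alpha_j}F_2=q^{n-1}$. For $F_1$, the key step is a clean formula for how $f_i(x)$ depends on a single $\alpha_j$: if $j\in\mathcal Z_i$, writing $\mathcal Z'_i=\mathcal Z_i\setminus\{j\}$ and $f'_i$ for the linearized polynomial whose roots span $\{\alpha_{j'}:j'\in\mathcal Z'_i\}$, the identity $\prod_{\gamma\in\mathbb F_q}(x-a\gamma)=x^q-a^{q-1}x$ yields $f_i(x)=f'_i(x)^q-f'_i(\alpha_j)^{q-1}f'_i(x)$. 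Since $f'_i$ is independent of $\alpha_j$ and has ordinary degree $q^{k-2}$, every coefficient $\mathbf T_{i\ell}$ has $\alpha_j$-degree at most $(q-1)q^{k-2}$.

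Combining this with the observation that $\mathbf T_{ik}=1$ (so the last column contributes $0$ to every term in the Leibniz expansion of $\det\mathbf T$), I would bound $\deg_{\alpha_j}F_1\le (k-1)(q-1)q^{k-2}$ and therefore $\deg_{\alpha_j}F\le q^{n-1}+(k-1)(q-1)q^{k-2}$. A short arithmetic check then shows that $s\ge\max\{n,k-1+\log_q k\}$ implies $q^s=q^{s-1}+(q-1)q^{s-1}\ge q^{n-1}+(q-1)kq^{k-2}>q^{n-1}+(k-1)(q-1)q^{k-2}$, which is precisely the inequality needed to invoke Schwartz--Zippel over $\mathbb F_{q^s}$ and produce the required $\alpha_j$'s.

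The only genuine obstacle behind the theorem is Claim \ref{claim}---the assertion that, under the subset condition (\ref{eq:subsetineq}), $\det\mathbf T$ is not the zero polynomial. Everything else in the argument is bookkeeping: the Moore determinant, the linearized-polynomial identity above, and the Schwartz--Zippel degree count. Consequently, I would defer the real work to the stronger algebraic statement proved in Section \ref{sec:algebraic} and present the material above as the packaging that converts Claim \ref{claim} into the concrete field-size bound $s\ge\max\{n,k-1+\log_q k\}$.
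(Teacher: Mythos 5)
Your proposal is correct and follows essentially the same route as the paper: the factorization $F=F_1F_2$, the Moore-determinant diagonal monomial argument for $F_2$, the identity $f_i=(f_i')^q-(f_i'(\alpha_j))^{q-1}f_i'$ giving $\deg_{\alpha_j}\mathbf T_{i\ell}\le(q-1)q^{k-2}$, the Leibniz bound with $\mathbf T_{ik}=1$, the arithmetic $q^s=q^{s-1}+(q-1)q^{s-1}>q^{n-1}+(k-1)(q-1)q^{k-2}$, and the Schwartz--Zippel conclusion, with the genuine content deferred to Claim~\ref{claim}. No material differences.
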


\subsection{Subcodes of Gabidulin codes}
If the necessary and sufficient condition in (\ref{eq:subsetineq}) is not satisfied, we cannot have an MDS code with the prescribed support constraints, and by fiat we cannot have an MRD code or a Gabidulin code. However, we can still ask whether a code with the largest possible rank distance can be achieved. In fact, we can show that the largest rank distance can be achieved by subcodes of Gabidulin codes for a large enough field sizes.
In \cite{yildiz2018optimum}, the following upper bound on the Hamming distance is noted:
\begin{equation}
    d_H\leq n-\ell+1
\end{equation}
where
\begin{equation}
        \ell\triangleq\max_{\emptyset\neq\Omega\subset[k]} \left(\left|\bigcap_{i\in\Omega}\mathcal Z_i\right| + |\Omega|\right)\geq k
\end{equation}
Since the rank distance of the code is upper bounded by the Hamming distance, we have that
\begin{equation}
\label{eq:drupperbound}
    d_R\leq n-\ell+1
\end{equation}

\begin{theorem}
    \label{thm:subcode}
    Suppose $s\geq \max\{n,\ell-1+\log_q\ell\}$. Then, there exists a subcode of a Gabidulin code in $\mathbb F_{q^s}$ with length $n$, dimension $k$, and rank distance $d_R=n-\ell+1$ such that its generator matrix satisfies (\ref{eq:zeroconstraints}).\hfill$\diamond$
\end{theorem}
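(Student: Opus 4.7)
The plan is to reduce Theorem \ref{thm:subcode} to Theorem \ref{thm:gabidulin} by realizing the desired code as a $k$-dimensional subcode of an $\ell$-dimensional Gabidulin code on the same evaluation points. The upper bound (\ref{eq:drupperbound}) is itself a strong hint for this strategy: a length-$n$, dimension-$\ell$ Gabidulin code already achieves rank distance $n-\ell+1$, and any linear subcode of it automatically inherits this lower bound on the rank distance while matching the upper bound (\ref{eq:drupperbound}) exactly.

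First I would extend the support family. Define $\mathcal Z'_i = \mathcal Z_i$ for $i\in[k]$ and $\mathcal Z'_i = \emptyset$ for $i\in[\ell]\setminus[k]$. To invoke Theorem \ref{thm:gabidulin} with dimension $\ell$, I need the subset inequality $|\bigcap_{i\in\Omega}\mathcal Z'_i|+|\Omega|\leq \ell$ for every nonempty $\Omega\subseteq[\ell]$. When $\Omega\subseteq[k]$, this is precisely the defining inequality of $\ell$. When $\Omega$ meets $[\ell]\setminus[k]$, the intersection is empty and the inequality reduces to $|\Omega|\leq\ell$, which is automatic. Each $|\mathcal Z'_i|\leq k-1\leq \ell-1$, and the padding procedure recalled just after (\ref{eq:subsetineq}) lets me enlarge every $\mathcal Z'_i$ to have exactly $\ell-1$ elements while preserving both the subset inequality and the inclusions $\mathcal Z_i\subseteq \mathcal Z'_i$, so the hypotheses of Theorem \ref{thm:gabidulin} for dimension $\ell$ are in force.

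Since $s\geq \max\{n,\,\ell-1+\log_q\ell\}$, Theorem \ref{thm:gabidulin} applied with dimension $\ell$ produces a Gabidulin code $\mathcal C'\subseteq\mathbb F_{q^s}^n$ with a generator matrix $\mathbf G'\in\mathbb F_{q^s}^{\ell\times n}$ respecting the augmented support constraints $\mathcal Z'_1,\dots,\mathcal Z'_\ell$. Let $\mathbf G\in\mathbb F_{q^s}^{k\times n}$ be the submatrix consisting of the first $k$ rows of $\mathbf G'$. These rows are linearly independent (because $\mathbf G'$ has full rank $\ell$), so $\mathbf G$ generates a $k$-dimensional linear code $\mathcal C\subseteq \mathcal C'$; by construction its rows respect the original support constraints (\ref{eq:zeroconstraints}), so $\mathcal C$ satisfies (\ref{eq:zeroconstraints}). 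Finally, since $\mathcal C\subseteq\mathcal C'$, every nonzero codeword of $\mathcal C$ lies in $\mathcal C'$, so $d_R(\mathcal C)\geq d_R(\mathcal C')=n-\ell+1$, which combined with (\ref{eq:drupperbound}) gives $d_R(\mathcal C)=n-\ell+1$.

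The main obstacle is conceptual rather than technical: recognizing that the right move is to pass to the ambient Gabidulin code of the larger dimension $\ell$ dictated by the tight upper bound (\ref{eq:drupperbound}), and observing that padding by empty sets makes the new family automatically compatible with the subset inequality. Once this reduction is set up, the rest is a direct invocation of Theorem \ref{thm:gabidulin} together with the elementary monotonicity of the rank distance under subcode containment; no new algebraic work on linearized polynomials is needed beyond what Claim \ref{claim} already supplies for Theorem \ref{thm:gabidulin}.
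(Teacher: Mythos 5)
Your proof is correct and takes essentially the same approach as the paper: pad the support family with empty sets $\mathcal Z_{k+1}=\cdots=\mathcal Z_\ell=\emptyset$, invoke Theorem \ref{thm:gabidulin} with dimension $\ell$, and then take the first $k$ rows of the resulting $\ell\times n$ generator matrix, using monotonicity of rank distance under subcode inclusion together with (\ref{eq:drupperbound}). The explicit re-padding of each $\mathcal Z'_i$ to size $\ell-1$ is harmless but unnecessary, since Theorem \ref{thm:gabidulin} as stated only requires (\ref{eq:subsetineq}).
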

\begin{proof}
    Define $\mathcal Z_{k+1}=\cdots=\mathcal Z_{\ell}=\emptyset$. Then, for any nonempty $\Omega\subset[\ell]$,
    \begin{equation}
        \left|\bigcap_{i\in\Omega}\mathcal Z_i\right| + |\Omega|\leq\ell
    \end{equation}
    Hence, by Theorem \ref{thm:gabidulin}, there exists a Gabidulin code of dimension $\ell$ with an $\ell\times n$ generator matrix $\mathbf G$ having zeros dictated by $\mathcal Z_1,\dots, \mathcal Z_{\ell}$. Since it is an MRD--code, its rank distance is $n-\ell+1$. The first $k$ rows of $\mathbf G$ will generate a subcode whose rank distance $d_R$ is as good as the Gabidulin code:  $d_R\geq n-\ell+1$. Hence, this subcode achieves the largest possible rank distance given in (\ref{eq:drupperbound}).
\end{proof}

\section{Proof of Claim \ref{claim} (and More)}
\label{sec:algebraic}
In this section, first we will extend the definition of linearized polynomials by allowing their coefficients to be multivariate polynomials.
Then, we will propose a more general statement than Claim \ref{claim}, namely Theorems \ref{thm:linearizedpol} and \ref{thm:matrixlang}, which, in fact, arise when trying to apply a proof by induction to Claim \ref{claim}.
Our generalization will be written in two different forms. Theorem \ref{thm:linearizedpol} will be in terms of linearized polynomials, whereas Theorem \ref{thm:matrixlang} will be in terms of matrices. However, both are equivalent and more general than Claim \ref{claim}. We will give a sketch of the proof in the language of matrices while the detailed proof will be given in the language of polynomials. We should emphasize that the material presented here in the matrix language is only for a better illustration of Theorem \ref{thm:linearizedpol}.

\subsection{Problem Setup}
Consider a finite field $\mathbb F_q$ and an extension field $\mathsf R_0=\mathbb F_{q^s}$. For $n\geq 1$, let $\mathsf R_n\triangleq\mathbb F_{q^s}[x_1,\dots, x_n]$ be the ring of multivariate polynomials in the indeterminates $x_1,x_2\dots,x_n$ over $\mathbb F_{q^s}$.

Recall that the notation $\mathsf R_n[x]$ denotes the ring of polynomials in the indeterminate $x$, whose coefficients are drawn from $\mathsf R_n$ (the coefficients are multivariate polynomials in $x_1,\dots,x_n$), i.e.,
\begin{equation}
    \mathsf R_n[x]\triangleq \left\{\sum_{i=0}^d c_ix^i\middle| d\geq 0, c_0,\dots,c_d\in\mathsf R_n\right\}
\end{equation}

The set of linearized polynomials over $\mathsf R_n$ is a subset of $\mathsf R_n[x]$, which we define as:
\begin{equation}
	\mathsf L_n\triangleq \left\{ \sum_{i=0}^d c_ix^{q^i}\middle| d\geq 0, c_0,\dots,c_d\in\mathsf R_n \right\}\subset\mathsf R_n[x]
\end{equation}

The $q$-degree of $f\in\mathsf L_n$ is defined as $\deg_q f=d$ if $f= \sum_{i=0}^d c_ix^{q^i}$ and  $c_d\neq 0$. We also take $\deg_q 0 = -\infty$.
Since $\mathsf L_n\subset \mathsf R_n[x]$, for any $f,g\in\mathsf L_n$, we will continue to use $\gcd\{f,g\}$ and $f\mid g$ notations by treating as $f,g\in\mathsf R_n[x]$.\\

We note the following properties of $\mathsf L_n$ (See \cite[Chapter 3]{lidl1997finite} as a reference textbook, where these properties are proven for $\mathsf L_0$, i.e., when the coefficients of the linearized polynomials are from $\mathbb F_{q^s}$. The same proofs can be extended to $\mathsf L_n$. We also give the proofs of \ref{p:ring} and \ref{p:subspace} in Appendix \ref{appx:properties} as the other properties are obvious):

\begin{enumerate}[label=\textbf{P\arabic*.},ref=P\arabic*]
	\item \label{p:ring}
	$\mathsf L_n$ is a ring with no zero divisors under the addition and the composition operation $\circ$.
	
	\item \label{p:deg}
	For any $f,g\in\mathsf L_n$, $\deg_q (f\circ g) = \deg_q (f) + \deg_q (g)$.
	
	\item \label{p:subspace}
	For any finite-dimensional subspace $V\subset\mathsf R_n$ over $\mathbb F_q$ and $t\geq 0$,
	\begin{equation}
	f=\prod_{\beta\in V}(x-\beta)^{q^t}\in\mathsf L_n
	\end{equation}
	and $\deg_q f=t+\dim V$.
	
	\item \label{p:xdividesf}
	For any $f\in\mathsf L_n$, if $x^{q^t}\mid f$, then $\exists f'\in\mathsf L_n$ such that  $f=f'\circ x^{q^t}$.\\
	
	\item \label{p:xdividesfog}
	For any $f,g\in\mathsf L_n$, if $x^{q^t}\mid f$, then $x^{q^t}\mid f\circ g$ and $x^{q^t}\mid g\circ f$.\\
	
	\item \label{p:xdividesgofnotf}
	For any $f,g\in\mathsf L_n$, if $x^q\nmid f$ and $x^{q^t}\mid g\circ f$, then $x^{q^t}\mid g$.\\
\end{enumerate}

We are interested in linearized polynomials of the following form:
\begin{equation}
    \mathsf f(\mathcal Z, t) \triangleq \prod_{\beta\in\text{span}\{x_i:i\in\mathcal Z\}}(x-\beta)^{q^t}\in\mathsf L_n,
    \qquad t\geq 0, \mathcal Z\subset[n]
\end{equation}
Note that these are linearized polynomials in light of \ref{p:subspace} above.
Furthermore, since the $x_i$'s are assumed to be indeterminates, any nontrivial linear combination of them is nonzero, i.e. the $x_i$'s are linearly independent. Hence,
\begin{equation}
\label{eq:degree}
    \deg_q\mathsf f(\mathcal Z,t) = t+\dim(\operatorname{span}\{x_i:i\in\mathcal Z\}) = t+|\mathcal Z|
\end{equation}
For $k\geq 1$, define the set of linearized polynomials in this form with $q$-degree at most $k-1$:
\begin{equation}
	\mathcal L_{n,k}
	\triangleq \left\{\mathsf f(\mathcal Z,t) \middle| t\geq 0,\mathcal Z\subset[n]
	\mbox{ s.t. } t+|\mathcal Z|\leq k-1 \right\} \subset \mathsf L_n
\end{equation}

We also note the following properties with regard to $\mathcal L_{n,k}$, whose proofs appear in Appendix \ref{appx:properties}.
\begin{enumerate}[resume*]
	\item \label{p:gcd}
	For any $f_1=\mathsf f(\mathcal Z_1,t_1),f_2=\mathsf f(\mathcal Z_2,t_2)\in\mathcal L_{n,k}$, we have
	$$\gcd \{f_1,f_2\}=\mathsf f(\mathcal Z_1\cap\mathcal Z_2,\min\{t_1,t_2\}) \in\mathcal L_{n,k}$$
	
	\item \label{p:f2dividesf1}
	For any $f_1,f_2\in\mathcal L_{n,k}$, if $f_2\mid f_1$, then $\exists f'_1\in\mathsf L_n,\: f_1=f'_1\circ f_2$.
	
	\item \label{p:substitution}
	Let $f=\mathsf f(\mathcal Z,t)\in\mathcal L_{n,k}$ and let $f'=f|_{x_n=0}\in\mathsf L_{n-1}$ (substitute $x_n=0$ in each coefficient of $f$). Then, $f'\in\mathcal L_{n-1,k}$ and
	\begin{equation}
	    f'=\begin{cases}
	    \mathsf f(\mathcal Z,t) & n\notin\mathcal Z\\
	    \mathsf f(\mathcal Z-\{n\},t+1) & n\in\mathcal Z
	    \end{cases}
	\end{equation}
\end{enumerate}

As a final note, it will be insightful to describe the composition operation between linearized polynomials in matrix language. It is known that multiplying two polynomials is equivalent to multiplying two Toeplitz matrices since both perform the convolution operation.
Now, we will give the analog when composing two linearized polynomials. Let $f = \sum_{i=0}^dc_ix^{q^i}\in\mathsf L_n$. For $b-a\geq d$, we define the following matrix:
\begin{equation*}
	\mathbf S_{a\times b}(f) = \begin{pmatrix}
		c_0^{q^0} & c_1^{q^0} & \cdots & c_{b-a}^{q^0}\\
		& c_0^{q^1} & c_1^{q^1} & \cdots & c_{b-a}^{q^1}\\
		&& \ddots & \ddots & & \ddots\\
		&&& c_0^{q^{a-1}} & c_1^{q^{a-1}} &\cdots& c_{b-a}^{q^{a-1}}
	\end{pmatrix}
\end{equation*}
where $c_i=0$ for $i>d$. Note that $a$ and $b$ are parameters that define the dimensions of the matrix $\mathbf S_{a\times b}(f)$, which is why we subscript $\mathbf S$ by $a\times b$.
For any linearized polynomials $f_1,f_2\in\mathsf L_n$, we have that
\begin{equation}
\label{eq:prodmatrixs}
	\mathbf S_{a\times b}(f_1\circ f_2) = \mathbf S_{a\times c}(f_1)\cdot \mathbf S_{c\times b}(f_2)
\end{equation}
for any $a,b,c$ such that $c-a\geq \deg_q f_1$ and $b-c\geq \deg_q f_2$. The proof follows straightforward calculations by definition. As a special case, when $f_1=x^{q^t}$, $f_2=f=\sum_{i=0}^dc_ix^{q^i}$, and $f_1\circ f_2=f^{q^t}$, we can write for $b-a\geq d$,
\begin{align}
    \kern-.4em
    \mathbf S_{a\times (b+t)}(f^{q^t})
    &= \mathbf S_{a\times (a+t)}(x^{q^t})\cdot\mathbf S_{(a+t)\times (b+t)}(f)\\
    &= \begin{pmatrix}\mathbf 0_{a\times t}&\mathbf I_{a\times a}\end{pmatrix}
    \cdot\mathbf S_{(a+t)\times (b+t)}(f)\\\label{eq:firstcolumns}
    &= \begin{pmatrix}
		0&\cdots & 0 & c_0^{q^t} & c_1^{q^t} & \cdots & c_{b-a}^{q^t}\\
		0&\cdots & 0 & & c_0^{q^{t+1}} & c_1^{q^{t+1}} & \cdots & c_{b-a}^{q^{t+1}}\\
		\vdots& & \vdots & && \ddots & \ddots & & \ddots\\
		\makebox[0pt][l]{$\smash{\underbrace{\phantom{\begin{matrix}0&\cdots&0\end{matrix}}}_{t}}$}0&\cdots & 0 & &&& c_0^{q^{t+a-1}} & c_1^{q^{t+a-1}} &\cdots& c_{b-a}^{q^{t+a-1}}
	\end{pmatrix}\\\nonumber
\end{align}
Since by definition, $\mathsf f(\mathcal Z,t)={f'}^{q^t}$ for some $f'\in\mathsf L_n$, we have the following property.
\begin{enumerate}[resume*]
    \item \label{p:firstcolumns}
    Let $f=\mathsf f(\mathcal Z,t)$ and $r\geq 0$.
    Then the first $r+t$ columns of $\mathbf S_{a\times (b+r)}(f^{q^r})$ are all zero.
\end{enumerate}

\subsection{Main Result}
Theorem \ref{thm:linearizedpol} is a more general statement than Claim \ref{claim} given in Section \ref{sec:coding}
and it is the analog of \cite[Theorem 3]{yildiz2018optimum} for linearized polynomials.

\startsubtheorem
\begin{theorem}
	\label{thm:linearizedpol}
	Let $k\geq m\geq 1$ and $n\geq 0$. Then, for any $f_1,f_2,\dots, f_m\in\mathcal L_{n,k}$, the following are equivalent:
	\begin{enumerate}[label=(\roman*)]
		\item For all $g_1,g_2,\dots,g_m\in\mathsf L_n$ and $r\geq 0$ such that ${\deg_q(g_i\circ f_i) \leq k-1}$, we have
		\begin{equation}
		\sum_{i=1}^m g_i\circ x^{q^r}\circ f_i = 0 \implies g_1=g_2=\cdots=g_m=0
		\end{equation}
		
		\item For all nonempty $\Omega\subset[m]$, we have
		\begin{equation}
		\label{eq:gcdineq}
		k-\deg_q\gcd_{i\in\Omega}f_i \geq \sum_{i\in\Omega} (k-\deg_q f_i)
		\end{equation}
		\hfill$\diamond$
	\end{enumerate}
\end{theorem}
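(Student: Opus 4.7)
The plan is to prove the two implications separately. The direction (i)$\Rightarrow$(ii) I would handle via the contrapositive, using a rank count of free $\mathsf R_n$-modules. The substantive direction (ii)$\Rightarrow$(i) I would prove by induction on $n$, using the substitution rule \ref{p:substitution} to reduce to $\mathsf L_{n-1}$ and then running a descent on the $x_n$-degree of the coefficients of the $g_i$.

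For (i)$\Rightarrow$(ii), I would assume that (ii) fails for some nonempty $\Omega\subset[m]$, put $h=\gcd_{i\in\Omega}f_i$ with $d=\deg_q h$, and use property \ref{p:f2dividesf1} to write $f_i=h_i\circ h$ for each $i\in\Omega$. Because linearized polynomials are additive, the relation $\sum_{i\in\Omega}g_i\circ f_i=0$ (taking $r=0$) rewrites as $(\sum_{i\in\Omega}g_i\circ h_i)\circ h=0$, and by \ref{p:ring} this reduces to $\sum_{i\in\Omega}g_i\circ h_i=0$. The map $(g_i)_{i\in\Omega}\mapsto\sum g_i\circ h_i$, restricted to $g_i$ with $\deg_q g_i\le k-1-\deg_q f_i$, is $\mathsf R_n$-linear between free $\mathsf R_n$-modules of ranks $\sum_{i\in\Omega}(k-\deg_q f_i)$ and $k-d$. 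Failure of (ii) makes the domain rank strictly exceed the codomain rank, so the kernel is nontrivial over the fraction field; clearing denominators yields a nonzero family $(g_i)\in\mathsf L_n$ (extended by $g_i=0$ for $i\notin\Omega$) that witnesses the failure of (i).

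For the hard direction, the base case $n=0$ is immediate: each $f_i=x^{q^{t_i}}$, and condition (ii) with $\Omega=[m]$ forces $\sum t_i-\min_i t_i\ge(m-1)k$; since each $t_i\le k-1$, this compels $m=1$, and then $g_1\circ x^{q^r}\circ x^{q^{t_1}}=0$ gives $g_1=0$ by \ref{p:ring}. For the inductive step from $n-1$ to $n$, I would substitute $x_n=0$ in every $f_i$ and in every coefficient of every $g_i$. By \ref{p:substitution} each $f_i$ becomes some $\tilde f_i\in\mathcal L_{n-1,k}$ of the same $q$-degree, and a case split on whether $n$ lies in every $\mathcal Z_i$ for $i\in\Omega$ (invoking \ref{p:gcd}) shows that $\deg_q\gcd_{i\in\Omega}\tilde f_i=\deg_q\gcd_{i\in\Omega}f_i$, so the $\tilde f_i$ still satisfy (ii). Substituting $x_n=0$ in the hypothesis $\sum g_i\circ x^{q^r}\circ f_i=0$ then yields $\sum\tilde g_i\circ x^{q^r}\circ\tilde f_i=0$, and the inductive hypothesis forces $\tilde g_i=0$ for every $i$. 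Hence every coefficient of every $g_i$ is divisible by $x_n$; writing $g_i=x_n\,g_i^{(1)}$ with $g_i^{(1)}\in\mathsf L_n$ and cancelling the common factor of $x_n$ produces a relation $\sum g_i^{(1)}\circ x^{q^r}\circ f_i=0$ of the same form. Iterating, the maximum $x_n$-degree appearing in the coefficients strictly decreases at each step, forcing $g_i=0$ after finitely many iterations.

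The main obstacle is the combinatorial identity $\deg_q\gcd_{i\in\Omega}\tilde f_i=\deg_q\gcd_{i\in\Omega}f_i$ that justifies passing (ii) through the substitution. In the subcase $n\in\bigcap_{i\in\Omega}\mathcal Z_i$, the minimum $t_i$ increases by one while $|\bigcap\mathcal Z_i|$ drops by one; in the complementary subcase both quantities are unchanged, so the sum is preserved either way. Secondary checks include verifying that $\deg_q(g_i^{(1)}\circ f_i)\le k-1$ survives each descent step (since division of a coefficient by $x_n$ preserves $q$-degree) and that the auxiliary parameter $r$ plays no adverse role (since $x_n\in\mathsf R_n$ is a scalar with respect to composition and commutes past $x^{q^r}$).
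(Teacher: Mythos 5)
Your forward direction matches the paper's: both show that a failure of \eqref{eq:gcdineq} for some $\Omega$ yields, after factoring out $\gcd_{i\in\Omega}f_i$, a rank/variable-count mismatch that guarantees a nontrivial solution. That part is fine.

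The backward direction, however, has a genuine gap. You induct on $n$ alone, and the inductive step hinges on the ``combinatorial identity'' $\deg_q\gcd_{i\in\Omega}\tilde f_i=\deg_q\gcd_{i\in\Omega}f_i$. Your justification splits into $n\in\bigcap_{i\in\Omega}\mathcal Z_i$ and its complement, claiming that in the complementary case both $\bigl|\bigcap_{i\in\Omega}\mathcal Z_i\bigr|$ and $\min_{i\in\Omega}t_i$ are unchanged. But $n\notin\bigcap_{i\in\Omega}\mathcal Z_i$ does not mean $n\notin\mathcal Z_i$ for all $i\in\Omega$: if $n$ lies in $\mathcal Z_i$ for \emph{some but not all} $i\in\Omega$, the intersection cardinality is preserved, yet $t_i$ is bumped to $t_i+1$ precisely for those $i$; if every index achieving $\min_{i\in\Omega}t_i$ happens to contain $n$, the minimum increases by one, so $\deg_q\gcd$ goes \emph{up} and the left side of \eqref{eq:gcdineq} drops by one. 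When \eqref{eq:gcdineq} holds with equality for that $\Omega$, the hypothesis is destroyed and the induction cannot proceed.

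Moreover this is not merely a matter of relabeling to pick a better variable to set to zero: there are instances satisfying (ii) where \emph{no} substitution works. Take $k=3$, $m=2$, $n=2$, $f_1=\mathsf f(\{1,2\},0)$, $f_2=\mathsf f(\emptyset,1)$. Then $\deg_q f_1=2$, $\deg_q f_2=1$, $\deg_q\gcd\{f_1,f_2\}=0$, and \eqref{eq:gcdineq} for $\Omega=\{1,2\}$ reads $3\geq 1+2$, tight. Setting either $x_1=0$ or $x_2=0$ turns $f_1$ into $\mathsf f(\{\,\cdot\,\},1)$ with $t_1=1$, while $f_2$ keeps $t_2=1$; the new $\gcd$ has $q$-degree $1$, and \eqref{eq:gcdineq} becomes $2\geq 3$, false. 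Your descent on the $x_n$-degree of the $g_i$ is a clean idea (the paper achieves the same effect in its Case 2c by a minimality argument, dividing out a common $x_n$), but it never gets off the ground here because the inductive hypothesis does not apply to the substituted system. The paper avoids this by inducting on $(k,m,n)$ in lexicographic order, with separate reductions: when some intermediate $\Omega$ makes \eqref{eq:gcdineq} tight it reduces $m$ (Case~1); when all $t_i\geq 1$ or exactly one $t_i=0$ it peels off a leading $x^q$ and reduces $k$ (Cases~2a, 2b); only when at least two $t_i$ equal zero and all intermediate inequalities are strict does it substitute a variable, and then it picks $j\notin\mathcal Z_m$ for some $m$ with $t_m=0$ so that the $\Omega=[m]$ inequality survives. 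Your example falls squarely into Case~2b, which your scheme has no way to handle. You would need at minimum to add the $k$- and $m$-reductions to make the induction close.
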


Before moving to the proof, in order to see how Claim \ref{claim} becomes a special case of Theorem \ref{thm:linearizedpol}, we will give an equivalent way of writing it in terms of matrices with entries from $\mathsf R_n$. This will also allow us to see its connection with \cite[Theorem 3]{yildiz2018optimum}.

For $i\in[m]$, let $f_i=\mathsf f(\mathcal Z_i,t_i)\in\mathcal L_{n,k}$ (i.e. $\mathcal Z_i\subset[n], t_i\geq 0$ such that $|\mathcal Z_i|+t_i\leq k-1$).
For $r\geq 0$, we will write $\mathbf S(f_i^{q^r})$ instead of $\mathbf S_{(k-t_i-|\mathcal Z_i|)\times (k+r)}(f_i^{q^r})$ for the ease of notation.
By \ref{p:firstcolumns}, $\mathbf S(f_i^{q^r})$ will look like as follows, where the $\times$'s represent the nonzero entries:
\begin{align}
\mathbf S(f_i^{q^r})
&=\begin{pmatrix}
        0 & \cdots & 0 & \times & \times & \cdots & \times\\
        0 & \cdots & 0 &  & \times & \times & \cdots & \times\\
        \vdots && \vdots &&&\ddots & \ddots & &\ddots\\
        \makebox[0pt][l]{$\smash{\underbrace{\phantom{\begin{matrix}0&\cdots&0\end{matrix}}}_{r+t_i}}$}0 & \cdots & 0&
        \makebox[0pt][l]{$\smash{\underbrace{\phantom{\begin{matrix}\times&\times&\cdots\end{matrix}}}_{k-1-t_i-|\mathcal Z_i|}}$}
        &&&
        \makebox[0pt][l]{$\smash{\underbrace{\phantom{\begin{matrix}\times&\times&\cdots&\times\end{matrix}}}_{|\mathcal Z_i|+1}}$}
        \times & \times & \cdots & \times\\
    \end{pmatrix}\!\!\!\!\left.\vphantom{\begin{pmatrix}0\\0\\\vdots\\0\end{pmatrix}}
				\right\} {\scriptstyle k-t_i-|\mathcal Z_i|}\\
\end{align}

Then, applying (\ref{eq:prodmatrixs}) to the expression $g_i\circ x^{q^r}\circ f_i=g_i\circ f_i^{q^r}$ in Theorem \ref{thm:linearizedpol} yields
\begin{equation}
    \mathbf S_{1\times (k+r)}(g_i\circ x^{q^r}\circ f_i)
    =\mathbf u_i\cdot
    \mathbf S(f_i^{q^r})
\end{equation}
where $\mathbf u_i=\mathbf S_{1\times (k-t_i-|\mathcal Z_i|)}(g_i)$ is a row vector.
Therefore, we can write
\begin{equation}
    \mathbf S_{1\times (k+r)}\left(\sum_{i=1}^mg_i\circ x^{q^r}\circ f_i\right)
    =\begin{pmatrix}\mathbf u_1&\cdots&\mathbf u_m\end{pmatrix}\cdot
    \begin{pmatrix}
	\mathbf S(f_1^{q^r})\\
	\vdots\\
	\mathbf S(f_m^{q^r})
	\end{pmatrix}
\end{equation}
which is a linear combination of the rows of
\begin{equation}\label{eq:matrix_m}
	\mathbf M(r) = \begin{pmatrix}
	\mathbf S(f_1^{q^r})\\
	\vdots\\
	\mathbf S(f_m^{q^r})
	\end{pmatrix}_{\sum_{i=1}^m(k-t_i-|\mathcal Z_i|)\times (k+r)}
\end{equation}
Hence, ($i$) in Theorem \ref{thm:linearizedpol} is equivalent to saying the matrix $\mathbf M(r)$ has full row rank.
Note that the first $r$ columns of $\mathbf M(r)$ are zero since the first $r+t_i$ columns of $\mathbf S(f_i^{q^r})$ are so.

Furthermore, ($ii$) in Theorem \ref{thm:linearizedpol} can be written in terms of the $\mathcal Z_i$'s and the $t_i$'s in lights of (\ref{eq:degree}) and \ref{p:gcd}.
Therefore, Theorem \ref{thm:linearizedpol} is equivalent to Theorem \ref{thm:matrixlang} below.
\begin{theorem}
	\label{thm:matrixlang}
	For $i\in[m]$, let $\mathcal Z_i\subset[n], t_i\geq 0$ such that $|\mathcal Z_i|+t_i\leq k-1$.
	Then, the matrix $\mathbf M(r)$ defined in (\ref{eq:matrix_m}) has full row rank for all $r\geq 0$ if and only if for all nonempty $\Omega\subset[m]$,
	\begin{equation}
		\label{eq:ineq_matrixthm}
		k-\left|\bigcap_{i\in\Omega}\mathcal Z_i\right|-\min_{i\in\Omega}t_i\geq \sum_{i\in\Omega} (k-t_i-|\mathcal Z_i|)
	\end{equation}
	\hfill$\diamond$
\end{theorem}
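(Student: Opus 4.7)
I would establish the two directions of the biconditional separately. The forward direction (i) $\Rightarrow$ (ii) admits a clean dimension-count by contrapositive, while (ii) $\Rightarrow$ (i) is the substantive part and proceeds by induction.

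For (i) $\Rightarrow$ (ii), suppose (\ref{eq:gcdineq}) fails for some nonempty $\Omega \subset [m]$. Let $f = \gcd_{i \in \Omega} f_i \in \mathcal L_{n,k}$ (\ref{p:gcd}). By \ref{p:f2dividesf1} and \ref{p:deg}, each $f_i$ with $i \in \Omega$ factors as $f_i = h_i \circ f$ with $h_i \in \mathsf L_n$ and $\deg_q h_i = \deg_q f_i - \deg_q f$. Freshman's Dream gives $x^{q^r} \circ h_i = \tilde h_i \circ x^{q^r}$, where $\tilde h_i$ is obtained from $h_i$ by raising every coefficient to the $q^r$-th power, and together with $x^{q^r} \circ f = f^{q^r}$ one gets
\[
\sum_{i \in \Omega} g_i \circ x^{q^r} \circ f_i \;=\; \Big(\sum_{i \in \Omega} g_i \circ \tilde h_i\Big) \circ f^{q^r}.
\]
Since $f^{q^r} \neq 0$, \ref{p:ring} reduces the vanishing to $\sum_{i \in \Omega} g_i \circ \tilde h_i = 0$. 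The bound $\deg_q g_i \leq k-1-\deg_q f_i$ provides $\sum_{i \in \Omega}(k - \deg_q f_i)$ free $\mathsf R_n$-coefficients, while the sum lies in a free $\mathsf R_n$-module of rank $k - \deg_q f$. Failure of (\ref{eq:gcdineq}) makes the domain strictly larger than the codomain, so over $\operatorname{Frac}(\mathsf R_n)$ there is a nontrivial kernel; clearing denominators and extending by $g_j = 0$ for $j \notin \Omega$ gives the desired witness against (i).

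For (ii) $\Rightarrow$ (i), I plan an induction in the spirit of the GM--MDS proof of \cite{yildiz2018optimum}. The base case $m = 1$ follows from \ref{p:ring}: $g_1 \circ x^{q^r} \circ f_1 = 0$ with $f_1 \neq 0$ forces $g_1 = 0$. For the inductive step, the natural move is to reduce $n$ by one via the substitution $x_n \to 0$ (\ref{p:substitution}): each $f_i$ remains in $\mathcal L_{n-1, k}$ with its $q$-degree $t_i + |\mathcal Z_i|$ preserved, and the inductive hypothesis applied to the substituted system would yield $g_i|_{x_n = 0} = 0$, i.e.\ $x_n$ divides every coefficient of $g_i$.

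The main obstacle --- and where the combinatorial hypothesis (\ref{eq:gcdineq}) must be deployed with care --- is twofold. First, a direct check with $\Omega$ for which some index $i^* \in \Omega$ achieves $\min_{i \in \Omega} t_i$ and has $n \in \mathcal Z_{i^*}$, while some other $i_0 \in \Omega$ has $n \notin \mathcal Z_{i_0}$, shows that (\ref{eq:gcdineq}) is \emph{not} always preserved under $x_n \to 0$, since $\min_{i \in \Omega} t_i$ can strictly increase while $\bigcap_{i \in \Omega} \mathcal Z_i$ stays fixed. So one must pick the variable to substitute carefully, or use a joint induction on $(m, n)$ that separately handles the tight sub-collections causing trouble --- likely by identifying a minimal $\Omega$ at which (\ref{eq:gcdineq}) is tight and exploiting its special structure via the GCD factorization used in the easy direction. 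Second, after successfully reducing $n$, one only concludes $x_n \mid g_i$; lifting this to $g_i = 0$ then requires bounding the $x_n$-degrees of the coefficients of $g_i$ via $\deg_q(g_i \circ f_i) \leq k - 1$ and iterating substitutions or invoking a Vandermonde-type argument on the $x_n$-expansion of each $g_i$. Working out this combinatorial bookkeeping is the technical heart of the argument.
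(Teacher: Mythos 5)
Your forward direction $(i)\Rightarrow(ii)$ is essentially the paper's argument: factor $f_i = h_i \circ f_\Omega$ via \ref{p:f2dividesf1}, push $x^{q^r}$ past the $h_i$, and compare the number of free coefficients in the $g_i$'s ($\sum_{i\in\Omega}(k - \deg_q f_i)$) against the rank of the target module ($k - \deg_q f_\Omega$). That part is sound.

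The gap is in $(ii)\Rightarrow(i)$, and it is more than ``combinatorial bookkeeping.'' Your proposed induction is on $(m,n)$ only, with $x_n \to 0$ as the sole reduction step, plus a vague gesture at GCD factorization for tight $\Omega$. The paper's induction is lexicographic on the \emph{three} parameters $(k,m,n)$, and reducing $k$ is indispensable. Concretely, the $x_n\to 0$ substitution preserves (\ref{eq:ineq_matrixthm}) for $\Omega=[m]$ only if you can find an index $j$ with $j\notin\mathcal Z_i$ for some $i$ attaining $t_i = \min_\ell t_\ell = 0$; this is guaranteed when at least two of the $t_i$ vanish (if two sets with $t_i=0$ were both $[n]$ they would coincide, forcing a violation of (\ref{eq:ineq_matrixthm}) for a two-element $\Omega$). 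But when \emph{no} $t_i$ is zero, or exactly one is, no such $j$ need exist, and the substitution strategy stalls. The paper handles these corner cases by reducing $k$ instead: if all $t_i\geq 1$ one writes $f_i = x^q\circ f_i'$ and bumps $r$ to $r+1$, dropping $k$ to $k-1$; if exactly one $t_m = 0$, one observes that the first block contributes a row with a nonzero entry in a column where every other row is zero, peels that row off, and again drops $k$. Neither move appears in your sketch, and without them the induction does not close. Your idea of isolating a tight $\Omega$ and using the GCD factorization does correspond to the paper's Case 1 (reducing $m$), but that only fires when a tight $\Omega$ with $2\leq|\Omega|\leq m-1$ exists; the strict case is where the $k$-reduction is needed and where your plan is incomplete.

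One smaller point: the lift from ``$x_n\mid g_i$ for all $i$'' to ``$g_i = 0$'' does not require degree bounds or a Vandermonde argument as you anticipate. The paper simply argues by contradiction: assume the $g_i$ are not all zero, divide out the highest common power of $x_n$ so that some coefficient of some $g_i$ is not divisible by $x_n$, and then the induction hypothesis applied after $x_n\to 0$ forces all $g_i|_{x_n=0}=0$, a contradiction. This normalization trick sidesteps the iteration you describe.
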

\stopsubtheorem

As a special case, when $m=k$, $|\mathcal Z_i|=k-1$, $t_i=0$, and $r=0$, each block in $\mathbf M(r)$ becomes a row vector with coefficients of $f_i=\mathsf f(\mathcal Z_i,t_i)=\sum_{i=1}^kc_{ij}x^{q^{j-1}}$:
$$\mathbf S_{(k-t_i-|\mathcal Z_i|)\times(k+r)}(f_i^{q^0})=\mathbf S_{1\times(k+r)}(f_i)=\begin{pmatrix}c_{i1}&c_{i2}&\cdots&c_{ik}\end{pmatrix}$$
Hence, we have Corollary \ref{corollary} below, which is Claim \ref{claim} in Section \ref{sec:coding}.
\begin{corollary}
	\label{corollary}
	For $i\in[k]$, let $\mathcal Z_i\subset[n]$ with $|\mathcal Z_i|=k-1$. Then,
	$$k\geq \left|\bigcap_{i\in\Omega}\mathcal Z_i\right| + |\Omega|,\qquad\forall\:\emptyset\neq\Omega\subset[k]$$
	if and only if
	$$\det\begin{pmatrix}
	c_{11} & c_{12} & \dots & c_{1k}\\
	c_{21} & c_{22} & \dots & c_{2k}\\
	\vdots & \vdots & & \vdots\\
	c_{k1} & c_{k2} & \dots & c_{kk}
	\end{pmatrix}\neq 0
	$$
	where $c_{ij}$'s are defined as the coefficients of $f_i=\mathsf f(\mathcal Z_i,0)=\sum_{i=1}^kc_{ij}x^{q^{j-1}}$.
	\hfill$\diamond$
\end{corollary}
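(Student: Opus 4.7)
My plan is to derive Corollary \ref{corollary} as an immediate specialization of Theorem \ref{thm:matrixlang}, taking $m = k$, $t_i = 0$, and $|\mathcal Z_i| = k - 1$ for all $i \in [k]$. Under these parameters, $k - t_i - |\mathcal Z_i| = 1$, so each block $\mathbf S(f_i^{q^r})$ is a single row of length $k + r$ whose first $r$ entries vanish by property \ref{p:firstcolumns}, making $\mathbf M(r)$ a $k \times (k+r)$ matrix.

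For $r = 0$, the definition of $\mathbf S_{1 \times k}(f_i)$ gives exactly the coefficient row $(c_{i1}, \ldots, c_{ik})$ of $f_i$, so $\mathbf M(0)$ is the $k \times k$ matrix appearing in the corollary and full row rank means nonvanishing of its determinant. Substituting $t_i = 0$ into (\ref{eq:ineq_matrixthm}) reduces the left side to $k - |\bigcap_{i \in \Omega} \mathcal Z_i|$ and the right side to $\sum_{i \in \Omega} 1 = |\Omega|$, which is exactly the subset inequality in the corollary. The forward direction is then immediate: under the subset inequality, Theorem \ref{thm:matrixlang} gives full row rank of $\mathbf M(r)$ for every $r \geq 0$, in particular $r = 0$, yielding $\det(c_{ij}) \neq 0$.

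The one subtle point I foresee is the converse: nonvanishing of $\det(c_{ij})$ directly gives full row rank only of $\mathbf M(0)$, whereas Theorem \ref{thm:matrixlang} is stated in terms of full row rank for \emph{all} $r \geq 0$. I would close this gap via the Freshman's Dream in characteristic $p$: since $f_i^{q^r}$ has coefficients $c_{ij}^{q^r}$, the nontrivial $k \times k$ block of $\mathbf M(r)$ is $(c_{ij}^{q^r})$, whose determinant equals $\bigl(\det(c_{ij})\bigr)^{q^r}$. Because $\mathsf R_n$ is an integral domain, this Frobenius power vanishes iff $\det(c_{ij})$ does, so full row rank of $\mathbf M(0)$ automatically propagates to every $r \geq 0$, and Theorem \ref{thm:matrixlang} then delivers the subset inequality. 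Beyond this one-line observation, no real obstacle is anticipated; the corollary is essentially bookkeeping on top of Theorem \ref{thm:matrixlang}.
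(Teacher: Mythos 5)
Your proposal is correct and takes essentially the same route as the paper, which obtains the corollary precisely by specializing Theorem~\ref{thm:matrixlang} to $m=k$, $t_i=0$, $|\mathcal Z_i|=k-1$ and reading off $\mathbf M(0)$ as the coefficient matrix $(c_{ij})$. Your Frobenius observation that $\det\bigl(c_{ij}^{q^r}\bigr)=\bigl(\det(c_{ij})\bigr)^{q^r}$ in the integral domain $\mathsf R_n$ is a valid (and welcome) way to close the ``all $r\geq 0$'' versus ``$r=0$'' quantifier gap in the converse, a point the paper leaves implicit; alternatively, the paper's proof of the necessity direction $(i)\implies(ii)$ of Theorem~\ref{thm:linearizedpol} already works with $r=0$ alone, so either patch suffices.
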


\subsection{Sketch of the proof of Theorem \ref{thm:matrixlang}}
The proof given here for Theorem \ref{thm:matrixlang} omits certain steps that the interested reader can fill in. The complete proof of the equivalent Theorem \ref{thm:linearizedpol} is given in Section \ref{sec:proof} and includes each and every step.

The following identity (\ref{eq:matrix_identity}) will be very useful throughout the proof.

For any $\Omega\subset[m]$ (wlog assume $\Omega=\{1,2,\dots,\ell\}$), we have $f_i=f'_i\circ f_0$ for $i\in[\ell]$, where $f_0=\gcd_{i\in\Omega}f_i$. Then, we can write (with the appropriate dimensions for $\mathbf S(\,\cdot\,)$)
\begin{equation}
	\label{eq:matrix_identity}
	\begin{pmatrix}
	 \mathbf S(f_1^{q^r})\\
	 \vdots\\
	 \mathbf S(f_{\ell}^{q^r})
	\end{pmatrix}
	=\underbrace{\begin{pmatrix}
		\mathbf S({f'_1}^{q^r})\\
		\vdots\\
		\mathbf S({f'_{\ell}}^{q^r})
	\end{pmatrix}}_{\begin{bmatrix}
	\mathbf 0_{*\times r} & \mathbf B'
	\end{bmatrix}}
	 \cdot \underbrace{\mathbf S(f_0)}_{\begin{bmatrix}
	\times \\ \mathbf S(f_0^{q^r})
	\end{bmatrix}} = \mathbf B'\cdot \mathbf S(f_0^{q^r})
\end{equation}
where the matrix $\mathbf B'$ has $(k-\left|\bigcap_{i\in\Omega}\mathcal Z_i\right|-\min_{i\in\Omega}t_i)$ columns and $\sum_{i\in\Omega} (k-t_i-|\mathcal Z_i|)$ rows. Note that these are respectively the left and right hand sides in (\ref{eq:ineq_matrixthm}).

Therefore, if (\ref{eq:ineq_matrixthm}) does not hold then $\mathbf B'$ will be a tall matrix and will not have full row rank, which solves $\implies$ direction. For the other direction, we will try to reduce the problem to the one that has a smaller $k,m$, or $n$ in order to do an inductive proof. We look into two cases:
\begin{enumerate}[label=Case \arabic*.,leftmargin=2.5\parindent]
	\item (\ref{eq:ineq_matrixthm}) is tight for some $2\leq |\Omega|\leq m-1$.
	\item (\ref{eq:ineq_matrixthm}) is strict for all $2\leq |\Omega|\leq m-1$.
\end{enumerate}

In the first case, the matrix $\mathbf B'$ becomes a square matrix. Hence,
\begin{align}
	\begin{pmatrix}
	\mathbf S(f_1^{q^r})\\
	\vdots\\
	\mathbf S(f_m^{q^r})
	\end{pmatrix}
	&=\begin{pmatrix}
	\mathbf B'\mathbf S(f_0^{q^r})\\
	\mathbf S(f_{\ell+1}^{q^r})\\
	\vdots\\
	\mathbf S(f_m^{q^r})
	\end{pmatrix}\\\label{eq:decomposition}
	&=\begin{pmatrix}
	\mathbf B'\\
	&\mathbf I\\
	&&\ddots\\
	&&&\mathbf I
	\end{pmatrix}
	\begin{pmatrix}
	\mathbf S(f_0^{q^r})\\
	\mathbf S(f_{\ell+1}^{q^r})\\
	\vdots\\
	\mathbf S(f_m^{q^r})
	\end{pmatrix}
\end{align}
This will reduce the problem into two smaller problems: The first one is showing that the matrix on the right in (\ref{eq:decomposition}) has full row rank. The second one is showing that $\mathbf B'$ is non-singular or that $\mathbf B'\cdot \mathbf S(f_0^{q^r})$, which is equal to the first $\ell$ blocks (see (\ref{eq:matrix_identity})), has full row rank. Both are smaller problems (in terms of the number of blocks) and one can show that both satisfy the inequalities in (\ref{eq:ineq_matrixthm}).

In the second case, since the inequalities are strict except for $|\Omega|=1,m$, we have some flexibility to play with the sets.
For example, we can remove an element $j$ from all the sets $\mathcal Z_i$'s containing $j$ and increase $t_i$ by $1$ (This corresponds to Case 2c in the proof of Theorem \ref{thm:linearizedpol}). This operation sets $x_j=0$ in the matrix $\mathbf M(r)$ and we can claim that if $\mathbf M(r)|_{x_j=0}$ has full row rank, then so does $\mathbf M(r)$. Hence, it reduces $n$ in the problem to $n-1$.
Furthermore, it can be shown that except for two corner cases (see Case 2a and 2b), one can carefully choose such an element $j$ so that removing it from the sets will not break (\ref{eq:ineq_matrixthm}) for $|\Omega|=m$.

The only two corner cases are when none or only one of the $t_i$'s is zero. If $t_i\geq 1$ for all $i\in[m]$ (i.e. the first $r+1$ columns of $\mathbf M(r)$ are all zero), then decreasing $k$ and each $t_i$ by $1$ and increasing $r$ by $1$ will reduce the problem into a smaller one (see Case 2a). If there is a unique zero, say $t_1=0$ (see Case 2b), then the first $r+1$ columns of $\mathbf S(f_i^{q^r})$ will be zero only for $i\geq 2$. Then, the matrix will look like
\begin{equation}
	\mathbf M(r) = \begin{pmatrix}
	0&\cdots&0&\times & \times & \cdots&\times \\
	0&\cdots&0&0&\times & \times & \cdots&\times \\
	\vdots&&\vdots&\vdots&&\ddots &\ddots &&\ddots \\
	0&\cdots&0&0&&&\times &\times &\cdots & \times\\\hline
	0&\cdots&0&0&\times&\times & \cdots & \times\\
	\vdots&&\vdots&\vdots&&\ddots & && \ddots \\
	0&\cdots&0&0&&&\times&\times &\cdots & \times \\\hline
	&&&&&\vdots
	\end{pmatrix}
\end{equation}
Hence, the first row is definitely not in the span of the other rows because it contains a nonzero in the $(r+1)$th column while the others do not. So, we can reduce the problem by removing the first row. This will decrease $k$ and every $t_i$ except $t_1$ by $1$ (and maybe $m$ too if there is a single row in the first block). Again, it can be shown that this operation does not violate (\ref{eq:ineq_matrixthm}).

\subsection{Proof of Theorem \ref{thm:linearizedpol}}
\label{sec:proof}
Let $f_i=\mathsf f(\mathcal Z_i,t_i)$.
For the ease of notation we will write $\textstyle f_{\Omega} \triangleq \gcd_{i\in\Omega}f_i$, which, by \ref{p:gcd}, is equal to
\begin{equation}\textstyle
    f_{\Omega} = \mathsf f\left(\bigcap\limits_{i\in\Omega}\mathcal Z_i,\: \min\limits_{i\in\Omega}t_i\right)
\end{equation}
We will first show the trivial direction ($(i)\implies (ii)$), then do induction for the other direction ($(ii)\implies (i)$).\\

\noindent$(i)\implies (ii)$:\\

Suppose that $(ii)$ does not hold and wlog, assume that for $\Omega=\{1,2,\dots,\ell \}$,
$$k-\deg_q f_{\Omega}<\sum_{i\in\Omega}(k-\deg_q f_i)$$
For $i\in\Omega$, let $f_i=f'_i\circ f_{\Omega}$ for some $f'_i\in\mathsf L_n$ (see \ref{p:f2dividesf1}). Then, for $r=0$ and for $g_1,\dots,g_{\ell}\in\mathsf L_n$ such that $\deg_q(g_i\circ f_i)\leq k-1$, in $(i)$, the equation $\sum_{i\in\Omega}g_i\circ f'_i=0$  defines homogeneous linear equations in coefficients of $g_i$'s. The number of variables is $\sum_{i\in\Omega}k-\deg_q f_i$ and the number of equations is  at most $k-\deg_q f_{\Omega}$. So, one can find $g_1,\dots,g_{\ell}$, not all zero, that solves this linear system.\\

\noindent$(ii)\implies(i)$:\\

We will do induction on parameters $(k,m,n)$ considered in the lexicographical order.

For $(k,m=1,n)$, $(i)$ always holds due to \ref{p:ring}: $g_1\circ x^{q^r}\circ f_1= 0\implies g_1=0$.

For $(k,m\geq 2,n=0)$, $(ii)$ never holds: $n=0\implies f_i=x^{q^{t_i}}$ for some $t_i$ for every $i$. Suppose $t_1\leq t_2$, then for $\Omega=\{1,2\}$, (\ref{eq:gcdineq}) becomes $k-t_1\geq(k-t_1)+(k-t_2)$, which contradicts with $|\mathcal Z_i|+t_i\leq k-1$.

For $k\geq m\geq 2$ and $n\geq 1$ assume that the statement ($(ii)\implies(i)$) is true for parameters $(k',m',n')<(k,m,n)$.
Take any $f_1,\dots, f_m\in\mathcal L_{n,k}$ for which, $(ii)$ is true. We will prove that $(i)$ holds under the following cases:

\begin{enumerate}[label=Case \arabic*.,leftmargin=2.5\parindent]
	\item $\exists\, \Omega\subset[m]$ with $2\leq |\Omega|\leq m-1$ such that (\ref{eq:gcdineq}) holds with equality.
	\item $\forall\, \Omega\subset[m]$ with $2\leq |\Omega|\leq m-1$, (\ref{eq:gcdineq}) holds strictly and any of these three:
	\begin{enumerate}[label=Case 2\alph*.]
	    \item For all $i\in[m]$, $t_i\geq 1$.
	    \item There exists a unique $i\in[m]$ such that $t_i=0$.
	    \item There exist at least two zero $t_i$.
	\end{enumerate}
\end{enumerate}

We will reduce $m$ in Case 1, $k$ in Case 2a and 2b, and $n$ in Case 2c. Note that since $k\geq m$, reducing $k$ sometimes may also reduce $m$, which may happen in Case 2b but will not happen in Case 2a, where we show $k\geq m+1$.\\

\textbf{Case 1:}
Wlog, assume that for $\Omega'=\{1,2,\dots,\ell \}$, 
$$k-\deg_q f_0 = \sum_{i\in\Omega'} (k-\deg_q f_i)$$
where $f_0=f_{\Omega'}$.
By \ref{p:f2dividesf1}, for $i\in[\ell]$, there exists $f'_i\in\mathsf L_n$ such that $f_i=f'_i\circ f_0$.

We will look at two smaller problems:
$(f_1,\dots,f_{\ell})\in\mathcal L_{n,k}^{\ell}$ and $(f_0,f_{\ell+1},\dots,f_m)\in\mathcal L_{n,k}^{m-\ell+1}$. Since $\ell<m$ and $m-\ell+1 < m$, the statement holds for both by the induction hypothesis.\\
It is trivial that $(ii)$ holds for $(f_1,\dots, f_{\ell})$ and for $(f_0,f_{\ell+1},\dots,f_m)$ when $0\notin\Omega$.
We will show that it also holds for 
$(f_0,f_{\ell+1},\dots,f_m)$ when $0\in\Omega$:
\begin{align}
k-\deg_q f_{\Omega}
&= k-\deg_q \gcd\{f_0, f_{(\Omega-\{0\})}\}\\
&= k-\deg_q \gcd\{f_{\Omega'},f_{(\Omega-\{0\})} \}\\
&\leq \sum_{i\in\Omega'\cup (\Omega-\{0\})}(k-\deg_q f_i)\\
&= \sum_{i\in\Omega'}(k-\deg_q f_i) + \sum_{i\in(\Omega-\{0\})}(k-\deg_q f_i)\\
&=(k-\deg_q f_0) + \sum_{i\in(\Omega-\{0\})}(k-\deg_q f_i)\\
&=\sum_{i\in\Omega}(k-\deg_q f_i)
\end{align}
Hence, by the induction hypothesis, $(i)$ holds for both $(f_1,\dots, f_{\ell})$ and $(f_0,f_{\ell+1},\dots,f_m)$.
Now, we will show that it also holds for $(f_1,\dots,f_m)$:

Suppose that for some $r\geq 0$ and $g_1,\dots, g_m\in\mathsf L_n$ with $\deg_q g_i\circ f_i\leq k-1$ for $i\in[m]$, we have
$$\sum_{i=1}^mg_i\circ x^{q^r}\circ f_i = 0$$
Since $x^{q^r}\mid \sum_{i=1}^{\ell}g_i\circ x^{q^r}\circ f'_i$, by \ref{p:xdividesf}, we can write
$$\sum_{i=1}^{\ell}g_i\circ x^{q^r}\circ f'_i=g_0\circ x^{q^r}$$
for some $g_0\in\mathsf L_n$.
Then,
\begin{align*}
    0
    &= \sum_{i=1}^m g_i\circ x^{q^r} \circ f_i\\
    &= \sum_{i=1}^{\ell} g_i\circ x^{q^r} \circ f'_i\circ f_0 + \sum_{i=\ell+1}^m g_i\circ x^{q^r} \circ f_i\\
    &= g_0\circ x^{q^r}\circ f_0 + \sum_{i=\ell+1}^m g_i\circ x^{q^r} \circ f_i
\end{align*}
Hence, $g_0=g_{\ell+1}=\cdots=g_m=0$. Then,
\begin{equation}
g_0\circ x^{q^r}\circ f_0 = \sum_{i=1}^{\ell}g_i\circ x^{q^r}\circ f_i = 0
\end{equation}
Hence, $g_1=\cdots=g_{\ell}=0$. Then, all $g_i$'s are zero.\\

\textbf{Case 2a:} For all $i\in[m]$, $f_i=x^q\circ f'_i$, where $f'_i=\mathsf f(\mathcal Z_i,t_i-1)\in\mathcal L_{n,k-1}$. Note that since $\min_{i\in[m]}t_i\geq 1$, we have $\deg_q f_{[m]}\geq 1$ and for $\Omega=[m]$, $(ii)$ implies
$$k-1\geq k-\deg_q f_{[m]}\geq \sum_{i\in[m]}(k-\deg_q f_i)\geq m$$
By the induction hypothesis, the statement is true for $(f'_1,\dots,f'_m)$ with parameters $(k-1,m,n)$.

$(ii)$ holds for $(f'_1,\dots,f'_m)$ because for any nonempty $\Omega\subset[m]$,
\begin{align*}
    k-1-\deg_q f'_{\Omega}
    &= k-\deg_q f_{\Omega}\\
    &\geq \sum_{i\in\Omega} (k-\deg_q f_i)\\
    &= \sum_{i\in\Omega} (k-1-\deg_q f'_i)
\end{align*}
Hence, $(i)$ holds for $(f'_1,\dots,f'_m)$ too and we will show that it also holds for $(f_1,\dots, f_m)$:

Suppose that for some $r\geq 0$ and $g_1,\dots, g_m\in\mathsf L_n$ with $\deg_q g_i\circ f_i\leq k-1$ for $i\in[m]$, we have
$$\sum_{i=1}^mg_i\circ x^{q^r}\circ f_i = 0$$
Then,
\begin{align*}
    0
    &= \sum_{i=1}^m g_i\circ x^{q^r}\circ f_i\\
    &= \sum_{i=1}^m g_i\circ x^{q^r}\circ x^q\circ f'_i\\
    &= \sum_{i=1}^m g_i\circ x^{q^{r+1}}\circ f'_i
\end{align*}
Hence, $g_1=\cdots=g_m=0$.\\

\textbf{Case 2b:} Suppose that $t_m=0$ and for $i\in[m-1]$, $t_i\geq 1$. For $i\in[m-1]$, let $f_i=x^q\circ f_i'$, where $f'_i=\mathsf f(\mathcal Z_i,t_i-1)\in\mathcal L_{n,k-1}$ and let $f'_m=f_m\in\mathcal L_{n,k}$.
Note that $f'_m\in\mathcal L_{n,k-1}$ if and only if $\deg_q f'_m \leq k-2$, in which case for $\Omega=[m]$, $(ii)$ implies
$$k\geq k-\deg_q f_{[m]}\geq \sum_{i\in[m]}(k-\deg_q f_i)\geq m+1$$

By the induction hypothesis, the statement is true for $(f'_1,\dots,f'_m)$ with parameters $(k-1,m,n)$ if $k\geq m+1$ (or $\deg_q f'_m\leq k-2$) and for $(f'_1,\dots,f'_{m-1})$ with parameters $(k-1,m-1,n)$.

We will show that $(ii)$ holds for $(f'_1,\dots,f'_m)$ when $k$ is replaced by $k-1$.
If $m\notin\Omega$, it is similar to Case 2a.
For $m\in\Omega$, first observe that since each root of $f_m$ has a multiplicity of $1$, we have $\gcd\{f_m,f'_i\} = \gcd\{f_m,f_i\}$ for $i\in[m-1]$; hence, $f_{\Omega}=f'_{\Omega}$.
Then,
\begin{align*}
    (k-1)-\deg_q f'_{\Omega}
    &= -1 + k - \deg_q f_{\Omega}\\
    &\geq -1+\sum_{i\in\Omega} (k-\deg_q f_i)\\
    &=(k-1-\deg_q f_m) + \sum_{i\in\Omega-\{m\}} (k-\deg_q f_i)\\
    &=(k-1-\deg_q f'_m) + \sum_{i\in\Omega-\{m\}} (k-1-\deg_q f'_i)\\
    &=\sum_{i\in\Omega}(k-1-\deg_q f'_i)
\end{align*}
Hence, $(i)$ also holds for $f'_i$'s.

Suppose that for some $r\geq 0$ and $g_1,\dots, g_m\in\mathsf L_n$ with $\deg_q(g_i\circ f_i)\leq k-1$, we have
$$\sum_{i=1}^m g_i\circ x^{q^r} \circ f_i=0$$
Then,
\begin{align*}
0
&=\sum_{i=1}^m g_i\circ x^{q^r}\circ f_i\\
&=g_m\circ x^{q^r}\circ f_m + \sum_{i=1}^{m-1}g_i\circ x^{q^{r}}\circ x^q\circ f'_i\\
&= g_m\circ x^{q^r}\circ f_m + \underbrace{\sum_{i=1}^{m-1}g_i\circ x^{q^{r+1}}\circ f'_i}_{\text{divisible by } x^{q^{r+1}} \text{ due to \ref{p:xdividesfog}}}
\end{align*}
Hence, $g_m\circ x^{q^r}\circ f_m$ is divisible by $x^{q^{r+1}}$ and since $x^q\nmid f_m$ (because $t_m=0$),
by \ref{p:xdividesgofnotf}, $x^{q^{r+1}}\mid g_m\circ x^{q^r}$. Then, by \ref{p:xdividesf}, we can write $g_m=g'_m\circ x^q$ for some $g'_m\in\mathsf L_n$ with $\deg_q g'_i=\deg_q g_i-1$.

If $\deg_q f_m=k-1$, then, $\deg_q g'_m\leq-1$, which implies $g_m=0$. Then, $g_1,\dots,g_{m-1}$ are also zero since $(i)$ holds for $(f'_1,\dots,f'_{m-1})$ with parameters $(k-1,m-1,n)$.

If $\deg_q f_m\leq k-2$, then,

\begin{equation}
0=g'_m\circ x^{q^{r+1}}\circ f'_m + \sum_{i=1}^{m-1}g_i\circ x^{q^{r+1}}\circ f'_i
\end{equation}
Then, $g_1=\cdots=g_{m-1}=g'_m=0$ since $(i)$ holds for $(f'_1,\dots,f'_m)$ with parameters $(k-1,m,n)$. Then all $g_i$'s are zero.\\

\textbf{Case 2c:} Wlog, assume that $t_{m-1}=t_m=0$.
If $\mathcal Z_{m-1}=\mathcal Z_m$, then for $\Omega=\{m-1,m\}$, $(ii)$ implies
$$k-\deg_q f_m=k-\deg_q \gcd\{f_{m-1},f_m\}\geq (k-\deg_q f_{m-1})+(k-\deg_q f_m) $$
which contradicts with $\deg_q f_{m-1}\leq k-1$.
Hence, either $\mathcal Z_{m-1}\neq [n]$ or $\mathcal Z_m\neq [n]$. Wlog, assume $\mathcal Z_m\neq [n]$ and $n\notin\mathcal Z_m$.

Now, we will substitute $x_n=0$. 
Let $f'_i=f_i\mid_{x_n=0}$.
By \ref{p:substitution}, $f'_i\in\mathcal L_{n-1,k}$ and
\begin{equation}
    f'_i=\mathsf f(\mathcal Z'_i,t'_i)=\begin{cases}
    \mathsf f(\mathcal Z_i,t_i) & n\notin\mathcal Z_i\\
    \mathsf f(\mathcal Z_i-\{n\},t_i+1) & n\in\mathcal Z_i
    \end{cases}
\end{equation}

By the induction hypothesis the statement is true for $(f'_1,\dots,f'_m)$ with parameters $(k,m,n-1)$. We will show that it satisfies $(ii)$:

For $|\Omega|=1$, it is trivial.

For $2\leq|\Omega|\leq m-1$, then
\begin{align}
    k-\deg_q f'_{\Omega}
    &= k-\left|\bigcap_{i\in\Omega}\mathcal Z'_i\right|-\min_{i\in\Omega}t'_i\\
    &\leq k-\left(\left|\bigcap_{i\in\Omega}\mathcal Z_i\right|-1\right)-\min_{i\in\Omega}t_i\\\label{eq:plusone}
    &= k+1-\deg_q f_{\Omega}\\
    &\leq \sum_{i\in\Omega}(k-\deg_q f_i)\\
    &= \sum_{i\in\Omega}(k-\deg_q f'_i)
\end{align}
where the last inequality is because we assume (\ref{eq:gcdineq}) holds strictly for $2\leq|\Omega|\leq m-1$ and the first inequality is because $t'_i\geq t_i$ and
$$\left|\bigcap_{i\in\Omega}\mathcal Z'_i\right| = \left|\bigcap_{i\in\Omega}\mathcal Z_i-\{n\}\right|\geq \left|\bigcap_{i\in\Omega}\mathcal Z_i\right|-1$$

For $|\Omega|=m$, (\ref{eq:gcdineq}) was not strict; however, there is no need to have the $+1$ in (\ref{eq:plusone}) since
$$n\notin \mathcal Z_m\implies n\notin\bigcap_{i\in[m]}\mathcal Z_i\implies \left|\bigcap_{i\in\Omega}\mathcal Z'_i\right|=\left|\bigcap_{i\in\Omega}\mathcal Z_i\right|$$

Therefore, $(ii)$ holds for $f'_i$'s. Hence, so does $(i)$.

Suppose that for some $g_1,\dots, g_m\in\mathsf L_n$, not all zero, with $\deg_q(g_i\circ f_i)\leq k-1$, we have
$$\sum_{i=1}^m g_i\circ x^{q^r}\circ f_i=0$$
We can further assume that at least one coefficient of one $g_i$ is not divisible by $x_n$. (Otherwise, divide them by $x_n$).
Define $g'_i=g_i\mid_{x_n=0}\in\mathsf L_{n-1}$. Then, the $g'_i$'s are not all zero. We can write
\begin{equation}
\sum_{i=1}^mg'_i\circ x^{q^r}\circ f'_i=\left.\left(\sum_{i=1}^mg_i\circ x^{q^r}\circ f_i\right)\right|_{x_n=0}=0\mid_{x_n=0}=0
\end{equation}
Then, $g'_1=\cdots=g'_m=0$. Contradiction.
\qed

\newpage
\section{Conclusion}
\label{sec:conclusion}
In this paper, we extended our proof technique in \cite{yildiz2018optimum} for Reed--Solomon codes to Gabidulin codes by writing an analog of the algebraic-combinatorial problem presented there. The main challenge in extending the result to Gabidulin codes was that, unlike polynomial multiplication, the composition operation between linearized polynomials is not commutative.
As a result, we showed that the work of Halbawi \textit{et al.} \cite{halbawi2014distributed} can be applied to networks with any number of source nodes, which had been shown only for $3$ source nodes.

Theorem \ref{thm:gabidulin} only claims the existence of Gabidulin codes since its proof is based on the multivariate polynomial $F(\alpha_1,\dots,\alpha_n)$ being not identically zero.
The same observation applies to subcodes of Gabidulin codes.
In order to explicitly construct a Gabidulin code, we need to explicitly specify the evaluations points $\alpha_1,\dots,\alpha_n$ for which $F$ takes a nonzero value.
One possible algorithm could be to generate random evaluation points until $F$ takes a nonzero value. However, currently, we do not know the average complexity of this algorithm.
Hence, how to construct such codes efficiently remains an important open problem.
As a special case, when the generator matrix is systematic (i.e. $\mathcal Z_i=[k]\backslash \{i\}$), constructions of Gabidulin codes are given in \cite{neri2018systematic}.

\bibliographystyle{IEEEtran} 
\bibliography{refs} 

\newpage
\appendix
\section{Proofs of some properties of linearized polynomials}
\label{appx:properties}
\begin{itemize}
	\item[\textbf{\ref{p:ring}.}] \emph{$\mathsf L_n$ is a ring with no zero divisors under the addition and the composition operation $\circ$.}
	\begin{proof}
    	Note that for any $a,b\in\mathsf R_n[x]$,
    	\begin{equation}
    	\label{eq:freshmansdream}
    	    (a+b)^q=a^q+b^q
    	\end{equation}
    	Let $f=\sum_{i=0}^{d_1}f_ix^{q^i},\: g=\sum_{i=0}^{d_2}g_ix^{q^i} \in\mathsf L_n$. Then,
    	\begin{align*}
    	    f\circ g
    	    &=f\left(\sum_{i=0}^{d_2}g_ix^{q^i}\right)\\
    	    &=\sum_{i=0}^{d_2}f(g_ix^{q^i})\\
    	    &=\sum_{i=0}^{d_2}\sum_{j=0}^{d_1}f_jg_i^{q^j}x^{q^{i+j}}\in\mathsf L_n
    	\end{align*}
    	Furthermore, if $f,g\neq 0$, then $f\circ g\neq 0$ since the leading coefficient, $f_{d_1}g_{d_2}^{q^{d_1}}$ is nonzero.
    	Hence, $\mathsf L_n$ has no zero divisors.
    	
    	By (\ref{eq:freshmansdream}), for any $f,g,h\in\mathsf L_n$,
    	\begin{align*}
    	  f\circ (g+h)
    	  &= f(g(x)+h(x))\\
    	  &=f(g(x))+f(h(x))\\
    	  &=f\circ g+f\circ h  
    	\end{align*}
    	The other ring properties are trivial.
	\end{proof}
	
	\item[\textbf{\ref{p:subspace}.}] \emph{For any finite-dimensional subspace $V\subset\mathsf R_n$ over $\mathbb F_q$ and $t\geq 0$,
	\begin{equation}
	f=\prod_{\beta\in V}(x-\beta)^{q^t}\in\mathsf L_n
	\end{equation}
	and $\deg_q f=t+\dim V$}
	\begin{proof}
	    It is sufficient to prove it for $t=0$ because
    	$$\prod_{\beta\in V}(x-\beta)^{q^t}=x^{q^t}\circ \prod_{\beta\in V}(x-\beta)$$
    	We do induction on $\dim V$.
    	If $\dim V=1$, then $V=\{\alpha a: \alpha\in\mathbb F_q\}$ for some $a\in\mathsf R_n$ and
    	\begin{align*}
    	    \prod_{\beta\in V}(x-\beta)
    	    &= \prod_{\alpha\in\mathbb F_q}(x-\alpha a)\\
    	    &= x^q-a^{q-1}x\in\mathsf L_n
    	\end{align*}
    	Suppose $V'\subset V$ is a subspace such that $\dim V' = \dim V-1$ and suppose $f'=\prod_{\beta\in V'}(x-\beta)\in\mathsf L_n$. Then, $V=\{\alpha a + b: \alpha\in \mathbb F_q, b\in V'\}$ for some $a\in\mathsf R_n$ and
    	\begin{align*}
    	    \prod_{\beta\in V}(x-\beta)
    	    &= \prod_{\alpha\in\mathbb F_q, b\in V'}(x-\alpha a-b)\\
    	    &= \prod_{\alpha\in\mathbb F_q}\prod_{b\in V'}((x-\alpha a)-b)\\
    	    &= \prod_{\alpha\in\mathbb F_q}f'(x-\alpha a)\\
    	    &= \prod_{\alpha\in\mathbb F_q}(f'(x)-\alpha f'(a))\\
    	    &= [x^q-(f'(a))^{q-1}x]\circ f'\in\mathsf L_n
    	\end{align*}
	\end{proof}

	\item[\textbf{\ref{p:gcd}.}] \emph{For any $f_1=\mathsf f(\mathcal Z_1,t_1),f_2=\mathsf f(\mathcal Z_2,t_2)\in\mathcal L_{n,k}$, we have $$\gcd \{f_1,f_2\}=\mathsf f(\mathcal Z_1\cap\mathcal Z_2,\min\{t_1,t_2\}) \in\mathcal L_{n,k}$$\vspace{-2em}}
	\begin{proof}
	    Note that each root of $f_i$ has a multiplicity of $q^{t_i}$. Therefore, the roots of $\gcd$ of $f_1$ and $f_2$ will be the elements of
	    $$\operatorname{span}\{x_j: j\in \mathcal Z_1\}\cap\operatorname{span}\{x_j: j\in \mathcal Z_2\}=\operatorname{span}\{x_j: j\in \mathcal Z_1\cap \mathcal Z_2\},$$
	    each with a multiplicity of $\min\{t_1,t_2\}$.
	\end{proof}
	
	\item[\textbf{\ref{p:f2dividesf1}.}] \emph{If $f_1,f_2\in\mathcal L_{n,k}$ and $f_2\mid f_1$, then $\exists f'_1\in\mathsf L_n,\: f_1=f'_1\circ f_2$.}
	\begin{proof}
    	Let $f_1=\mathsf f(\mathcal Z_1,t_1)$ and $f_2=\mathsf f(\mathcal Z_2,t_2)$. Since each root of $f_i$ has a multiplicity of $q^{t_i}$, we have $t_2\leq t_1$. Furthermore, the roots of $f_2$ are also roots of $f_1$:
    	$$\operatorname{span}\{x_j:j\in\mathcal Z_2\}\subset\operatorname{span}\{x_j:j\in\mathcal Z_1\}$$
    	Hence, $\mathcal Z_2\subset \mathcal Z_1$. Then,
    	\begin{align*}
    	    f_1
    	    &= \prod_{\beta\in\operatorname{span}\{x_j:j\in\mathcal Z_1\}} (x-\beta)^{q^{t_1}}\\
    	    &= \prod_{a\in\operatorname{span}\{x_j:j\in\mathcal Z_1-\mathcal Z_2\}}\:\prod_{b\in\operatorname{span}\{x_j:j\in\mathcal Z_2\}} (x-a-b)^{q^{t_1}}\\
    	    &= \prod_{a\in\operatorname{span}\{x_j:j\in\mathcal Z_1-\mathcal Z_2\}} (f_2(x-a))^{q^{t_1-t_2}}\\
    	    &= \prod_{a\in\operatorname{span}\{x_j:j\in\mathcal Z_1-\mathcal Z_2\}} (f_2(x)-f_2(a))^{q^{t_1-t_2}}\\
    	    &= \prod_{\beta\in\operatorname{span}\{f_2(x_j):j\in\mathcal Z_1-\mathcal Z_2\}} (f_2(x)-\beta)^{q^{t_1-t_2}}\\
    	    &= f'\circ f_2
    	\end{align*}
    	where $\displaystyle f'=\prod_{\beta\in\operatorname{span}\{f_2(x_j):j\in\mathcal Z_1-\mathcal Z_2\}}(x-\beta)^{q^{t_1-t_2}}\in\mathsf L_n$.
	\end{proof}
	
	\item[\textbf{\ref{p:substitution}.}] \emph{Let $f=\mathsf f(\mathcal Z,t)\in\mathcal L_{n,k}$ and let $f'=f|_{x_n=0}\in\mathsf L_{n-1}$ (substitute $x_n=0$ in each coefficient of $f$). Then, $f'\in\mathcal L_{n-1,k}$ and
	\begin{equation}
	    f'=\begin{cases}
	    \mathsf f(\mathcal Z,t) & n\notin\mathcal Z\\
	    \mathsf f(\mathcal Z-\{n\},t+1) & n\in\mathcal Z
	    \end{cases}
	\end{equation}}
	\begin{proof}
	    It is trivial when $n\notin\mathcal Z$. So, suppose $n\in\mathcal Z$. Then,
	    \begin{align*}
	        f'
	        &= \left.\left(\prod_{\beta\in\operatorname{span}\{x_i: i\in\mathcal Z\}}(x-\beta)^{q^t}\right)\right|_{x_n=0}\\
	        &= \left.\left(\prod_{\beta\in\operatorname{span}\{x_i: i\in\mathcal Z-\{n\}\}}\:\prod_{\alpha\in\mathbb F_q}(x-\beta-\alpha x_n)^{q^t}\right)\right|_{x_n=0}\\
	        &= \prod_{\beta\in\operatorname{span}\{x_i: i\in\mathcal Z-\{n\}\}}\:\prod_{\alpha\in\mathbb F_q}(x-\beta)^{q^t}\\
	        &= \prod_{\beta\in\operatorname{span}\{x_i: i\in\mathcal Z-\{n\}\}}(x-\beta)^{q^{t+1}}\\
	        &= \mathsf f(\mathcal Z-\{n\},t+1)\in\mathcal L_{n-1,k}
	    \end{align*}
	\end{proof}
\end{itemize}

\section{Generalized Hall's Theorem}
\label{appx:hallsthm}
Let $G=(U,V,E)$ represent the bipartite graph with the bipartite sets of vertices $U$ and $V$ and the edges $E\subset U\times V$. Let $N_G(\Omega)\subset V$ denote the neighborhood of $\Omega\subset U$, i.e. the set of all vertices in $V$ adjacent to some element of $\Omega$.
\begin{theorem}[Generalized Hall's Theorem]
    Let $G=(U,V,E)$ be a bipartite graph. Suppose that there exist integers $c\geq 0$ and $d_i\geq 1$ for $i\in U$ such that for any nonempty $\Omega\subset U$,
    \begin{equation}
        \label{eq:halls_ineq}
        |N_G(\Omega)|\geq c+\sum_{i\in\Omega}d_i
    \end{equation}
    Then, one can keep removing edges from $E$ without violating any of the inequalities until the degree of $i$ is exactly $c+d_i$ for all $i\in U$. \hfill$\diamond$
\end{theorem}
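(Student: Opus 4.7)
The plan is induction on the total edge excess $E \triangleq \sum_{i \in U}(\deg_G(i) - c - d_i)$, which is nonnegative since the hypothesis for $\Omega = \{i\}$ gives $\deg_G(i) = |N_G(\{i\})| \geq c + d_i$. When $E = 0$ every vertex already has the required degree and we are done. For the inductive step I pick any $i^* \in U$ with $\deg_G(i^*) > c + d_{i^*}$ and exhibit a neighbor $v^* \in N_G(i^*)$ whose removal preserves every inequality; since the excess then drops by one, induction closes.

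Call a nonempty $\Omega \subset U$ \emph{tight} if $|N_G(\Omega)| = c + \sum_{i \in \Omega} d_i$. The key structural fact I will use is that the family of tight sets containing $i^*$ is closed under intersection. Indeed, for any $\Omega_1, \Omega_2$ both containing $i^*$, the submodular inequality
\begin{equation*}
|N_G(\Omega_1 \cap \Omega_2)| + |N_G(\Omega_1 \cup \Omega_2)| \leq |N_G(\Omega_1)| + |N_G(\Omega_2)|
\end{equation*}
combined with the hypothesis applied to $\Omega_1 \cap \Omega_2$ and $\Omega_1 \cup \Omega_2$ sandwiches $|N_G(\Omega_1)| + |N_G(\Omega_2)|$ when both are tight, forcing equality throughout and hence tightness of the intersection (and the union). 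Thus there is a unique minimal tight set $\Omega_0$ containing $i^*$; if no tight set contains $i^*$ at all, any edge incident to $i^*$ is removable and we are done. Since $\deg_G(i^*) > c + d_{i^*}$, the singleton $\{i^*\}$ is not tight, so $|\Omega_0| \geq 2$.

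Next I claim some $v^* \in N_G(i^*)$ has a second neighbor in $\Omega_0 \setminus \{i^*\}$. Otherwise $N_G(i^*)$ and $N_G(\Omega_0 \setminus \{i^*\})$ would be disjoint subsets of $N_G(\Omega_0)$, giving
\begin{equation*}
|N_G(\Omega_0 \setminus \{i^*\})| \leq |N_G(\Omega_0)| - \deg_G(i^*) \leq \Bigl(c + \sum_{j \in \Omega_0} d_j\Bigr) - (c + d_{i^*} + 1) = \sum_{j \in \Omega_0 \setminus \{i^*\}} d_j - 1,
\end{equation*}
which contradicts the hypothesis for $\Omega_0 \setminus \{i^*\}$ since $c \geq 0$. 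Having found $v^*$, I verify that deleting $(i^*, v^*)$ preserves every inequality: only tight sets containing $i^*$ can possibly be affected, but every such set contains $\Omega_0$ by minimality, so the second neighbor of $v^*$ in $\Omega_0 \setminus \{i^*\}$ keeps $v^*$ inside $N_G(\Omega \setminus \{i^*\})$ and $|N_G(\Omega)|$ does not decrease. The main conceptual obstacle is the submodular intersection-closure step; once it is in place, the counting step and the safety verification each take only a few lines.
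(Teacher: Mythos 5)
Your proof is correct and takes a genuinely different route from the paper's. The paper inducts on $|U|$, splitting into two cases: if (\ref{eq:halls_ineq}) is tight for some $\Omega$ with $2\leq|\Omega|\leq|U|-1$, it decomposes $G$ into the restriction to $\Omega$ and a contraction in which $\Omega$ is merged to a single super-vertex, verifies the hypothesis on each piece, and recurses; otherwise it removes edges directly, arguing separately according to whether the other endpoint in $V$ has degree one or degree at least two. You instead induct on the total degree excess $\sum_{i\in U}(\deg_G(i)-c-d_i)$ and exhibit a single safe edge at a surplus vertex $i^*$, using submodularity of $\Omega\mapsto|N_G(\Omega)|$ to show that the tight sets through $i^*$ are closed under intersection, hence admit a unique minimal member $\Omega_0$ with $|\Omega_0|\geq 2$; a short count then produces a $v^*\in N_G(i^*)$ that also sees $\Omega_0\setminus\{i^*\}$, and minimality of $\Omega_0$ guarantees that deleting $(i^*,v^*)$ leaves every constraint intact. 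Your argument is more local and sidesteps the paper's somewhat fiddly strict case, at the cost of invoking submodularity explicitly; the paper's contraction step mirrors the divide-and-conquer decomposition used elsewhere in its GM-MDS-style proofs and is thematically consistent with the rest of the paper, but your route is self-contained and arguably tidier.
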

\begin{proof}
    We will do induction on $|U|$. If $|U|=1$, it is trivial. Let $n\geq 2$ and suppose it is true when $|U|<n$.
    Let $|U|=n$. We consider two cases:
    \begin{enumerate}
        \item (\ref{eq:halls_ineq}) is \emph{tight} for some $\Omega$ with $2\leq |\Omega|\leq n-1$.
        
        Let $G_1=(\Omega, V, E_1)$, where $E_1=E\cap (\Omega\times V)$ and
        $G_2=( \Omega^c\cup\{\Omega\},V,E_2)$,
        where $\Omega^c=U-\Omega$ and
        $$E_2=(E-E_1)\cup \{(\Omega,j):j\in N_G(\Omega)\}$$
        In other words, to obtain $G_2$, we merge the vertices in $\Omega$ into a single vertex called $\Omega$ with the edges from that to every vertex in $N_G(\Omega)$.
        Furthermore, let $d_{\Omega}=\sum_{i\in\Omega}d_i$.
        
        We will show that (\ref{eq:halls_ineq}) holds for $G_1$ and $G_2$ if and only if it holds for $G$. ($\Leftarrow$ direction is trivial)
        Let $\Omega_1\subset\Omega, \Omega_2\subset \Omega^c$. Then,
        \begin{align*}
            |N_G(\Omega_1\cup\Omega_2)|
            &=|N_G(\Omega_1)|+|N_G(\Omega_2)-N_G(\Omega_1)|\\
            &\geq |N_G(\Omega_1)| + |N_G(\Omega_2)-N_G(\Omega)|\\
            &= |N_G(\Omega_1)| + |N_G(\Omega_2\cup\Omega)|-|N_G(\Omega)|\\
            &= |N_{G_1}(\Omega_1)| + |N_{G_2}(\Omega_2\cup\{\Omega\})|-(c+d_{\Omega})\\
            &\geq \left(c+\sum_{i\in\Omega_1}d_i\right) + \left(c+d_{\Omega}+\sum_{i\in\Omega_2}d_i\right)-(c+d_{\Omega})\\
            &= c+\sum_{i\in\Omega_1\cup\Omega_2}d_i
        \end{align*}
        
        Since $|\Omega|\leq n-1$ and $|\Omega^c\cup\{\Omega\}|\leq n-1$, by the induction hypothesis, we can remove edges from $G_1$ and $G_2$ until the degree of $i$ is $c+d_i$ for all $i\in U$. (Note that none of the edges from the vertex $\Omega$ in $G_2$ will be removed since its degree is already $c+d_{\Omega}$.)
        
        \item (\ref{eq:halls_ineq}) is \emph{strict} for all $\Omega$ with $2\leq |\Omega|\leq n-1$.\\
        If there exists an edge $(i,j)\in E$ such that the degree of $i$ is at least $c+d_i+1$ and the degree of $j$ is at least $2$, then removing $(i,j)$ will not violate (\ref{eq:halls_ineq}) because all the inequalities are strict except for $|\Omega|=n$, in which case, the left hand side is not affected.
        Now, we can assume that if a vertex $i\in U$ has degree at least $c+d_i+1$, then it is disconnected from the other vertices in $U$. Then, removing any edge from such a vertex $i$ will not violate any of the inequalities.\qedhere
    \end{enumerate}
\end{proof}

As a special case, letting $c=0$ and $d_i=1$ for all $i$ yields to the Hall's Marriage Theorem:
\begin{corollary}[Hall's Theorem]
    Let $G=(U,V,E)$ be a bipartite graph. If $|N_G(\Omega)|\geq |\Omega|$ for all $\Omega\subset U$, then there is a one-to-one matching from $U$ to $V$.\hfill$\diamond$
\end{corollary}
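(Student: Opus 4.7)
The plan is induction on $n=|U|$. The base case $n=1$ is immediate: the single inequality says $|N_G(\{i\})|\geq c+d_i$, so I simply delete any $|N_G(\{i\})|-(c+d_i)$ edges incident to $i$, and the unique constraint becomes tight. For the inductive step, I will split into two cases depending on whether some ``intermediate'' constraint is tight.

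\emph{Case 1: there exists $\Omega\subset U$ with $2\leq |\Omega|\leq n-1$ for which $|N_G(\Omega)|=c+\sum_{i\in\Omega}d_i$.} The idea is to decompose the problem into two strictly smaller instances whose solutions glue to a solution for $G$. Let $G_1$ be the bipartite graph induced by $\Omega$ on one side and all of $V$ on the other; let $G_2$ be obtained from $G$ by contracting $\Omega$ to a single super-vertex with combined demand $d_\Omega=\sum_{i\in\Omega}d_i$ (and keeping constant $c$). Both have $|U|\leq n-1$. I would verify that Hall-type inequalities transfer to $G_1$ and $G_2$: for $G_1$ this is immediate from the inequality for $G$ restricted to subsets of $\Omega$, and for $G_2$ one has to check sets of the form $\Omega_2\cup\{\Omega\}$ with $\Omega_2\subset U\setminus\Omega$, using $|N_G(\Omega\cup\Omega_2)|=|N_G(\Omega)|+|N_G(\Omega_2)\setminus N_G(\Omega)|$ and the tightness at $\Omega$. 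The inductive hypothesis applied to $G_1$ and $G_2$ trims both down to exact degrees; because $|N_G(\Omega)|=c+d_\Omega$ is already tight, no edges incident to the super-vertex of $G_2$ are deleted, so the trimmed $G_1$ and $G_2$ are consistent on $V\cap N_G(\Omega)$ and combine back into a valid trimmed $G$.

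\emph{Case 2: every constraint with $2\leq|\Omega|\leq n-1$ is strict.} Here I exploit the slack to remove a single edge and apply induction to the graph with $n$ still equal to $|U|$ but with one fewer edge (a second-level induction on $|E|$, or equivalently iterating until all degrees match). The only constraints that could fail are $|\Omega|=1$ and $|\Omega|=n$. Removing an edge $(i,j)$ with $\deg(i)>c+d_i$ and $\deg(j)\geq 2$ preserves the $|\Omega|=n$ inequality (since $j$ remains in the neighborhood) and the $|\Omega|=1$ inequality (since $\deg(i)$ was strictly above $c+d_i$), while leaving the strict middle inequalities safely strict. If no such edge exists, then every vertex $i\in U$ with $\deg(i)>c+d_i$ is connected only to degree-one vertices of $V$, i.e.\ such an $i$ is ``disconnected'' from the rest of $U$ through its neighborhood; removing any edge incident to such an $i$ cannot affect any neighborhood other than $\{i\}$, and again only the $|\Omega|=1$ and $|\Omega|=n$ constraints are at stake, both of which survive.

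The main obstacle I anticipate is \emph{Case 1}, specifically the bookkeeping in the contracted graph $G_2$: I need the Hall-type inequality to transfer cleanly to sets that mix ordinary vertices of $U\setminus\Omega$ with the super-vertex, and the crucial inclusion–exclusion step $|N_G(\Omega_1\cup\Omega_2)|\geq |N_{G_1}(\Omega_1)|+|N_{G_2}(\Omega_2\cup\{\Omega\})|-(c+d_\Omega)$ must combine with the $G_1$/$G_2$ hypotheses to reproduce the original bound, with the tightness at $\Omega$ being exactly what cancels the extra $c+d_\Omega$ term. Case 2 is conceptually easier but requires the careful observation that ``bad'' vertices (those still above their target degree) become harmlessly isolated from the combinatorial structure of the other middle inequalities.
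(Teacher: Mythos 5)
Your argument is essentially identical to the paper's proof of the Generalized Hall's Theorem: the same induction on $|U|$, the same split into the case of a tight intermediate constraint (handled by restricting to $\Omega$ and contracting $\Omega$ to a super-vertex of demand $d_\Omega$, with tightness guaranteeing the super-vertex keeps all its edges so the two trimmed pieces glue back together) and the all-strict case (handled by deleting one edge at a time). The stated corollary then follows exactly as in the paper by specializing to $c=0$ and $d_i=1$: the trimmed graph leaves each $i\in U$ with exactly one neighbor, and the preserved inequalities $|N_G(\Omega)|\geq|\Omega|$ force these neighbors to be distinct, i.e.\ a one-to-one matching.
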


Letting $c=|V|-|U|$ and $d_i=1$ for all $i$ yields to the following corollary, which is also proved in \cite[Theorem 2]{dau2014existence}:

\begin{corollary}
    Let $\mathcal Z_1,\mathcal Z_2,\dots,\mathcal Z_k\subset[n]$ such that for all nonempty $\Omega\subset[k]$,
    \begin{equation}
        \left|\bigcap_{i\in\Omega}\mathcal Z_i\right|+|\Omega|\leq k
    \end{equation}
    Then, one can keep adding elements from $[n]$ to these subsets without violating any of the inequalities until each subset has exactly $k-1$ elements.\hfill$\diamond$
\end{corollary}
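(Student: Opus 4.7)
The plan is to derive this corollary as a direct application of the Generalized Hall's Theorem proved just above. The bridge is to take complements: adding an element $j$ to $\mathcal Z_i$ is the same as removing $j$ from the complement $U_i := [n] \setminus \mathcal Z_i$, and intersections of $\mathcal Z_i$'s become complements of unions of $U_i$'s via de Morgan.

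Concretely, I would build the bipartite graph $G = ([k], [n], E)$ with edge $(i,j) \in E$ iff $j \notin \mathcal Z_i$. Then the degree of vertex $i \in [k]$ equals $|U_i| = n - |\mathcal Z_i|$, and for every nonempty $\Omega \subset [k]$,
\begin{equation*}
    N_G(\Omega) \;=\; \bigcup_{i\in\Omega} U_i \;=\; [n] \setminus \bigcap_{i\in\Omega}\mathcal Z_i,
\end{equation*}
so the hypothesis $\big|\bigcap_{i\in\Omega}\mathcal Z_i\big| + |\Omega| \leq k$ translates exactly into $|N_G(\Omega)| \geq (n-k) + |\Omega|$. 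This is precisely the hypothesis of the Generalized Hall's Theorem with $c = n-k$ and $d_i = 1$ for all $i \in [k]$.

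Applying that theorem gives a subgraph $G'$ of $G$ in which the degree of each $i$ is exactly $c + d_i = n - k + 1$, while $|N_{G'}(\Omega)| \geq (n-k) + |\Omega|$ continues to hold for every nonempty $\Omega$. Defining $\mathcal Z_i^{\mathrm{new}} := [n] \setminus U_i^{\mathrm{new}}$ where $U_i^{\mathrm{new}}$ is the neighborhood of $i$ in $G'$, we obtain $|\mathcal Z_i^{\mathrm{new}}| = n - (n-k+1) = k-1$, and since each edge removal corresponds to moving an element from $U_i$ into $\mathcal Z_i$, we have $\mathcal Z_i \subset \mathcal Z_i^{\mathrm{new}}$. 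Reversing the de Morgan translation on $|N_{G'}(\Omega)| \geq (n-k) + |\Omega|$ yields $\big|\bigcap_{i\in\Omega}\mathcal Z_i^{\mathrm{new}}\big| + |\Omega| \leq k$, which is the preservation of the constraint we needed.

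There is no serious obstacle here once the Generalized Hall's Theorem is in hand — the only substantive step is recognizing the complement-graph encoding. The work has been pushed entirely into the Generalized Hall's Theorem itself, whose inductive proof (via a tight-subset case and a strict-inequality case) handles all the combinatorial bookkeeping. The choice of parameters $c = n-k$, $d_i = 1$ is forced: $d_i = 1$ because we are removing elements one at a time from the complements, and $c = n-k$ because the target complement size is $n-k+1 = c + 1$.
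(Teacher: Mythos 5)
Your proposal is correct and matches the paper's own proof essentially verbatim: the same complement bipartite graph $G=([k],[n],E)$ with $E=\{(i,j):j\notin\mathcal Z_i\}$, the same translation of the intersection condition into $|N_G(\Omega)|\geq (n-k)+|\Omega|$, and the same application of the Generalized Hall's Theorem with $c=n-k$ and $d_i=1$, with edge removals corresponding to adding elements to the $\mathcal Z_i$. Nothing further is needed.
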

\begin{proof}
    Consider the bipartite graph $G=([k],[n],E)$ where $E=\{(i,j): j\notin\mathcal Z_i\}$. Then, $\left|\bigcap_{i\in\Omega}\mathcal Z_i\right|=n-|N_G(\Omega)|$ and the inequality becomes
    \begin{equation}
        |N_G(\Omega)|\geq (n-k) + |\Omega|
    \end{equation}
    Note that removing edges from the graph corresponds to adding elements from $[n]$ to the subsets.
\end{proof}

\end{document}